\PassOptionsToPackage{numbers, sort&compress}{natbib}
\documentclass[acmsmall,screen]{acmart}

\usepackage[utf8]{inputenc} 
\usepackage[T1]{fontenc}    
\usepackage{hyperref}       
\usepackage{url}            
\usepackage{booktabs}       
\usepackage{amsfonts}       
\usepackage{nicefrac}       
\usepackage{microtype}      


\usepackage{bm}
\usepackage{amsmath}
\usepackage{graphicx}
\usepackage[linesnumbered,ruled,vlined]{algorithm2e}
\usepackage[caption=false,font=footnotesize]{subfig}
\captionsetup[subfigure]{subrefformat=simple,labelformat=simple,listofformat=subsimple}

\graphicspath{{figures/}}
\newcommand{\eg}{{e.g.,}\xspace}
\newcommand{\ie}{{i.e.,}\xspace}
\newcommand{\etal}{{et al.\@}\xspace}
\newcommand{\spara}[1]{\vspace{1mm}\noindent\textbf{#1.}}
\newcommand{\OPT}{\operatorname{OPT}}
\newcommand{\e}{{\ensuremath{\mathrm{e}}}}

\newcommand{\MG}{\textsc{MGreedy}\xspace}
\newcommand{\Sm}{\ensuremath{S_{\mathrm{m}}}}
\newcommand{\Sg}{\ensuremath{S_{\mathrm{g}}}}
\newcommand{\Ss}{\ensuremath{S^\ast}}
\newcommand{\As}{\ensuremath{A^\ast}}


\AtBeginDocument{%
	\providecommand\BibTeX{{%
			\normalfont B\kern-0.5em{\scshape i\kern-0.25em b}\kern-0.8em\TeX}}}


\setcopyright{rightsretained}
\acmJournal{POMACS}
\acmYear{2021} \acmVolume{5} \acmNumber{1} 
\acmArticle{8} \acmMonth{3} \acmPrice{15.00}
\acmDOI{10.1145/3447386}


\received{October 2020} 
\received[revised]{December 2020}
\received[accepted]{January 2021}



\begin{document}
	
\title{Revisiting Modified Greedy Algorithm for Monotone Submodular Maximization with a Knapsack Constraint}
\titlenote{The paper will appear in 2021 ACM SIGMETRICS conference (SIGMETRICS '21), June 14--18, 2021, Beijing, China.}


\author{Jing Tang}
\orcid{0000-0002-0785-707X}
\email{isejtang@nus.edu.sg}
\affiliation{%
	\department{Department of Industrial Systems Engineering and Management}
	\institution{National University of Singapore}
	\country{Singapore}
}

\author{Xueyan Tang}
\email{asxytang@ntu.edu.sg}
\affiliation{%
	\department{School of Computer Science and Engineering}
	\institution{Nanyang Technological University}
	\country{Singapore}
}

\author{Andrew Lim}
\email{isealim@nus.edu.sg}
\affiliation{%
	\department{Department of Industrial Systems Engineering and Management}
	\institution{National University of Singapore}
	\country{Singapore}
}

\author{Kai Han}
\email{hankai@ustc.edu.cn}
\affiliation{%
	\department{School of Computer Science and Technology}
	\institution{University of Science and Technology of China}
	\country{China}
}

\author{Chongshou Li}
\email{iselc@nus.edu.sg}
\affiliation{%
	\department{Department of Industrial Systems Engineering and Management}
	\institution{National University of Singapore}
	\country{Singapore}
}

\author{Junsong Yuan}
\email{jsyuan@buffalo.edu}
\affiliation{%
	\department{Department of Computer Science and Engineering}
	\institution{State University of New York at Buffalo}
	\country{USA}
}

\begin{abstract}
	Monotone submodular maximization with a knapsack constraint is NP-hard. Various approximation algorithms have been devised to address this optimization problem. In this paper, we revisit the widely known modified greedy algorithm. First, we show that this algorithm can achieve an approximation factor of $0.405$, which significantly improves the known factors of $0.357$ given by \citet{Wolsey_MGreedy_1982} and $(1-1/\e)/2\approx 0.316$ given by \citet{Khuller_BMCP_1999}. More importantly, our analysis closes a gap in Khuller et al.'s proof for the extensively mentioned approximation factor of $(1-1/\sqrt{\e})\approx 0.393$ in the literature to clarify a long-standing misconception on this issue. Second, we enhance the modified greedy algorithm to derive a data-dependent upper bound on the optimum. We empirically demonstrate the tightness of our upper bound with a real-world application. The bound enables us to obtain a data-dependent ratio typically much higher than $0.405$ between the solution value of the modified greedy algorithm and the optimum. It can also be used to significantly improve the efficiency of algorithms such as branch and bound. 
\end{abstract}

\begin{CCSXML}
	<ccs2012>
	<concept>
	<concept_id>10002950.10003714.10003716.10011141.10010040</concept_id>
	<concept_desc>Mathematics of computing~Submodular optimization and polymatroids</concept_desc>
	<concept_significance>500</concept_significance>
	</concept>
	<concept>
	<concept_id>10003752.10003809.10003636</concept_id>
	<concept_desc>Theory of computation~Approximation algorithms analysis</concept_desc>
	<concept_significance>500</concept_significance>
	</concept>
	<concept>
	<concept_id>10003752.10003809.10003716.10011141.10010040</concept_id>
	<concept_desc>Theory of computation~Submodular optimization and polymatroids</concept_desc>
	<concept_significance>500</concept_significance>
	</concept>
	<concept>
	<concept_id>10003752.10003809.10011254.10011256</concept_id>
	<concept_desc>Theory of computation~Branch-and-bound</concept_desc>
	<concept_significance>100</concept_significance>
	</concept>
	<concept>
	<concept_id>10010147.10010178</concept_id>
	<concept_desc>Computing methodologies~Artificial intelligence</concept_desc>
	<concept_significance>300</concept_significance>
	</concept>
	</ccs2012>
\end{CCSXML}

\ccsdesc[500]{Mathematics of computing~Submodular optimization and polymatroids}
\ccsdesc[500]{Theory of computation~Approximation algorithms analysis}
\ccsdesc[500]{Theory of computation~Submodular optimization and polymatroids}
\ccsdesc[100]{Theory of computation~Branch-and-bound}
\ccsdesc[300]{Computing methodologies~Artificial intelligence}

\keywords{Submodular Maximization; Greedy Algorithm; Approximation Guarantee}
	
\maketitle

\begin{sloppy}
	\section{Introduction}\label{sec:introduction}
A set function $f\colon 2^V\rightarrow \mathbb{R}$ is \textit{submodular} \cite{Nemhauser_submodular_1978} if for all $S,T\subseteq V$, it holds that \[f(S)+f(T)\geq f(S\cup T)+f(S\cap T).\] Alternatively, defining \[f(v\mid S):= f(S\cup\{v\})-f(S)\] as the marginal gain of adding an element $v\in V$ to a set $S\subseteq V$, an equivalent definition of a submodular set function $f$ is that for all $S \subseteq T$ and $v\in V\setminus T$, \[f(v\mid S)\geq f(v\mid T).\] The latter form of definition describes the concept of \textit{diminishing return} in economics.
The function $f$ is \textit{monotone} nondecreasing if and only if $f(S)\leq f(T)$ for all $S\subseteq T$ (or equivalently $f(v\mid S)\geq 0$).


Many well known combinatorial optimization problems are essentially monotone submodular maximization, such as maximum coverage~\cite{Feige_setcover_1998,Khuller_BMCP_1999}, maximum facility location~\cite{Cornuejols_FLP_1977,Ageev_UFLP_1999}, and maximum entropy sampling~\cite{Shewry_entropy_1987,Ko_entropy_1995}. In addition, a growing number of problems in real-world applications of artificial intelligence and machine learning are also shown to be monotone submodular maximization. These problems include sensor placement~\cite{Krause_waterSensor_2008,Krause_sensor_2008}, feature selection~\cite{Krause_feature_2005,Yu_feature_2016}, viral marketing~\cite{Kempe_maxInfluence_2003,Kempe_influence_2005}, image segmentation~\cite{Boykov_segmentation_2001,Delong_segmentation_2012,Jegelka_segmentation_2011}, document summarization~\cite{Lin_document_2010,Lin_document_2011}, data subset selection \cite{Krause_information_2007,Wei_dataselection_2015}, etc.


In this paper, we study monotone submodular maximization with a knapsack constraint, which is defined as follows:
\begin{equation}\label{eqn:problem}
	\max_{S\subseteq V} f(S) \text{ s.t.\@ } c(S)\leq b,
\end{equation}
where $f$ is a monotone nondecreasing submodular set function\footnote{We assume that function $f$ is normalized, \ie~$f(\emptyset)=0$, and is given via a {value oracle}.}, $c(S)=\sum_{v\in S}c(v)$ and $c(v)$ represents the cost of element $v$. Without loss of generality, we may assume that the cost of each element does not exceed $b$, since elements with cost greater than $b$ do not belong to any feasible solution. 
This optimization problem has already found great utility in the aforementioned applications.

Since this optimization problem is NP-hard in general, various approximation algorithms have been proposed. For a special cardinality constraint where the costs of all elements are identical, \ie~$c(v)=1$ for every element $v\in V$, \citet{Nemhauser_submodular_1978}~proposed a simple hill-climbing greedy algorithm that can provide $(1-1/\e)$-approximation (we say that an algorithm provides $\alpha$-approximation, where $\alpha\leq 1$, if it always obtains a solution of value at least $\alpha$ times the value of an optimal solution). However, for the general knapsack constraint, the approximation factor of such a greedy algorithm is unbounded. \citet{Wolsey_MGreedy_1982} found that slightly modifying the original greedy algorithm can provide an approximation ratio of $(1-1/\e^\beta)\approx 0.357$, where $\beta$ is the unique root of the equation $\e^x=2-x$. \citet{Khuller_BMCP_1999} studied the budgeted maximum coverage problem (a special case of monotone submodular maximization with a knapsack constraint where the function value is always an integer), and derived two approximation factors, \ie $(1-1/{\e})/2 \approx 0.316$ and $(1-1/\sqrt{\e})\approx 0.393$, for the modified greedy algorithm developed by \citet{Wolsey_MGreedy_1982}. We note that the factor of $(1-1/\sqrt{\e})$ is extensively mentioned in the literature, but unfortunately their proof was flawed as pointed out by \citet{Zhang_billboard_2018}. It becomes an open question whether the modified greedy algorithm can achieve an approximation ratio at least $(1-1/\sqrt{\e})$. 

\citet{Khuller_BMCP_1999} also developed a partial enumeration greedy algorithm that improves the approximation factor to $(1-1/\e)$, which was later shown to be applicable to the general problem \eqref{eqn:problem}~\cite{Sviridenko_maxSub_2004}. However, this algorithm requires $O(n^5)$ (where $n=\lvert V\rvert$ is the total number of elements in the ground set $V$) function value computations, which is not scalable. We focus on the scalable modified greedy algorithm of $O(n^2)$ \cite{Khuller_BMCP_1999} and conduct a comprehensive analysis on its worst-case approximation guarantee. Based on the monotonicity and submodularity, we derive several relations governing the solution value and the optimum. Leveraging these relations, we establish an approximation ratio of $0.405$, which significantly improves the factors of $(1-1/\e^\beta)\approx 0.357$ given by \citet{Wolsey_MGreedy_1982} and $(1-1/{\e})/2\approx 0.316$ given by \citet{Khuller_BMCP_1999}. More importantly, our analysis fills a critical gap in the proof for the factor of $(1-1/\sqrt{\e})\approx 0.393$ given by \citet{Khuller_BMCP_1999} to clarify a long-standing misconception in the literature.



In addition, we enhance the modified greedy algorithm to derive a data-dependent upper bound on the optimum. We empirically demonstrate the tightness of our bound with a real-world application of viral marketing in social networks. The bound enables us to obtain a data-dependent ratio typically much higher than $0.405$ between the solution value of the modified greedy algorithm and the optimum. It can also be used to significantly improve the efficiency of algorithms such as branch and bound as shown by our experimental evaluations.

	\section{Modified Greedy Algorithm and Approximation Guarantees}
For the unit cost version of the optimization problem defined in \eqref{eqn:problem}, a simple greedy algorithm that chooses the element with the largest marginal gain in each iteration can achieve an approximation factor of $(1-1/\e)$ \cite{Nemhauser_submodular_1978}. Inspired by this elegant algorithm, for the general cost version, it is natural to apply a similar greedy algorithm according to cost-effectiveness. That is, we pick in each iteration the element that maximizes the ratio $\frac{f(v\mid \Sg)}{c(v)}$ based on the selected element set $\Sg$. Unfortunately, this simple greedy algorithm  has an unbounded approximation factor. Consider, for example, two elements $u$ and $v$ with $f(\{u\})=1$, $f(\{v\})=2\varepsilon$, $c(u)=1$ and $c(v)=\varepsilon$, where $\varepsilon$ is a small positive number. When $b=1$, the optimal solution is $\{u\}$ while the greedy heuristic picks $\{v\}$. The approximation factor for this instance is $2\varepsilon$, and is therefore unbounded.


\begin{algorithm}[!t]
	\caption{\MG}\label{alg:MGreedy}
	initialize $\Sg\gets\emptyset$, $V'\gets V$\;\label{alg:MGreedy_filter}
	\While{$V'\neq \emptyset$}{
		find $u\gets\arg\max_{v\in V'}\big\{\frac{f(v\mid \Sg)}{c(v)}\big\}$\;\label{alg:MGreedy_argmax}
		\If{$c(S)+c(u)\leq b$\label{alg:MGreedy_constraint}}{$\Sg\gets \Sg\cup\{u\}$\;\label{alg:MGreedy_update}}
		update the search space $V'\leftarrow V'\setminus\{u\}$\;\label{alg:MGreedy_reducespace}
	}
	$v^\ast\gets \arg\max_{v\in V} f(v)$\;\label{alg:MGreedy_solution1}
	$\Sm\gets\arg\max_{S\in\{\{v^*\},\Sg\}}f(S)$\;\label{alg:MGreedy_solution}
	\Return{$\Sm$}\;\label{alg:MGreedy_return}
\end{algorithm}

Interestingly, a small modification to the greedy algorithm, referred to as \MG (Algorithm~\ref{alg:MGreedy}), achieves a constant approximation factor \cite{Khuller_BMCP_1999,Wolsey_MGreedy_1982}. Specifically, in addition to $\Sg$ obtained from the greedy heuristic (Lines~\ref{alg:MGreedy_filter}--\ref{alg:MGreedy_reducespace}), the algorithm also finds an element $v^\ast$ that maximizes $f(\{v\})$ (Line~\ref{alg:MGreedy_solution1}), and then chooses the better one between $\Sg$ and $\{v^\ast\}$ (Lines \ref{alg:MGreedy_solution} and \ref{alg:MGreedy_return}). 
\citet{Wolsey_MGreedy_1982} showed that \MG achieves an approximation factor of $0.357$. Later, \citet{Khuller_BMCP_1999} gave two approximation factors of $(1-1/\e)/2$ and $(1-1/\sqrt{\e})$ that can be achieved by \MG for the budgeted maximum coverage problem, but unfortunately their proof for the factor of $(1-1/\sqrt{\e})$ was flawed as pointed out by \citet{Zhang_billboard_2018}. In Section~\ref{subsec:correct-analysis}, we provide an explanation of the problem in Khuller \etal's proof and show a correct proof for the factor of $(1-1/\sqrt{\e})$.
A main result of this paper is to establish an improved approximation factor of $0.405$ for \MG. 
\begin{theorem}\label{thm:approx-ratio}
	Let $\alpha^\bot$ be the root of 
	\begin{equation}\label{eqn:alpha^bot}
		(1-\alpha^\bot)\ln(1-\alpha^\bot)+(2-1/\e)(1-2\alpha^\bot)=0
	\end{equation}
	satisfying $\alpha^\bot>0.405$. The \MG algorithm achieves an approximation factor of $\alpha^\bot$.
\end{theorem}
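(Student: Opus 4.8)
The plan is to fix an optimal feasible solution $\Ss$ with $\OPT = f(\Ss)$ and to lower-bound the ratio $\alpha := f(\Sm)/\OPT$, proving $\alpha \ge \alpha^\bot$. First I would run the greedy loop conceptually and locate the \emph{critical element} $\sigma$: the first element attaining the best marginal-to-cost ratio that is nevertheless discarded because adding it would exceed the budget $b$. Write $S'$ for the greedy set held just before $\sigma$ is examined and set $r := c(S')/b$ and $\rho := c(\sigma)/b$, so that $r + \rho > 1$ by construction. The essential structural point — and precisely where Khuller et al.'s argument breaks — is that up to and including the examination of $\sigma$ \emph{no} element has yet been removed from the candidate pool for failing the budget test, so every element of $\Ss \setminus S'$ is still available when each of these greedy choices is made. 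Hence the standard exchange inequality, combining submodularity with maximality of the chosen ratio, applies verbatim, giving $\OPT - f(S_{i+1}) \le (1 - c(a_{i+1})/b)(\OPT - f(S_i))$ for every addition $a_{i+1}$ up to $S'$ and the analogous bound for the rejected $\sigma$.

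Iterating this recursion I would extract the governing relations. The additions up to $S'$ yield $f(S') \ge (1 - \e^{-r})\OPT$, while applying the recursion once more to the discarded $\sigma$ gives $f(\sigma \mid S') \ge \rho\,(\OPT - f(S'))$. These I would pair with three elementary facts: $f(\Sm) \ge f(\Sg) \ge f(S')$; that $f(\Sm) \ge f(\{v^\ast\}) \ge f(\sigma\mid S')$ via $f(\{v^\ast\}) \ge f(\{\sigma\}) \ge f(\sigma\mid S')$; and that every greedy marginal is likewise capped by $f(\{v^\ast\})$. Writing $s := f(S')/\OPT$ and feeding $r + \rho > 1$ through these bounds turns the configuration into a constraint set in the few scalars $s, r, \rho$ in which $\alpha$ is squeezed; eliminating $r$ and $\rho$ should reduce the worst case to a single-variable transcendental program whose boundary is exactly \eqref{eqn:alpha^bot}.

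I expect the real difficulty to lie in this last reduction, because the naive combination — bounding $f(\Sm)$ below by the average $\tfrac12\bigl(f(S') + f(\sigma\mid S')\bigr) \ge \tfrac12(1 - 1/\e)\OPT$, or even the sharper balancing $\min_r \max\{1 - \e^{-r},\,(1-r)\e^{-r}\}$ — only recovers the weaker factors $(1-1/\e)/2$ and $(1-1/\e^\beta) \approx 0.357$. To reach $\alpha^\bot$ one must use the relations \emph{simultaneously} rather than discarding information: in particular, the fact that $\OPT$ is genuinely realized forbids $f(S')$ and $f(\sigma\mid S')$ from both being large against the \emph{same} $\OPT$, and it is this coupling that produces the factor $2 - 1/\e$ in \eqref{eqn:alpha^bot}. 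I would therefore split on whether $f(\Sm)=f(\Sg)$ or $f(\Sm)=f(\{v^\ast\})$, bound $\alpha$ from below in each regime by the corresponding branch of the program, and finally verify that the threshold function $\alpha \mapsto (1-\alpha)\ln(1-\alpha) + (2-1/\e)(1-2\alpha)$ is strictly monotone across the relevant range, so that \eqref{eqn:alpha^bot} has a unique root, which a direct numerical check places above $0.405$. This monotonicity is what legitimizes concluding $\alpha \ge \alpha^\bot$ from the inequality the relations produce.
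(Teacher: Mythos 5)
There is a genuine gap. The relations you actually derive---$f(S')\geq(1-\e^{-r})\cdot f(\OPT)$, $f(\sigma\mid S')\geq \rho\,\bigl(f(\OPT)-f(S')\bigr)$, $r+\rho>1$, and $f(\Sm)\geq\max\{f(S'),\,f(\sigma\mid S')\}$---are exactly Wolsey's constraint set. Writing $s=f(S')/f(\OPT)$ and eliminating $\rho$ via $\rho>1-r$ and $r\leq-\ln(1-s)$, they give $\alpha\geq\max\{s,\,(1-s)(1+\ln(1-s))\}$, whose minimum over $s$ is attained where $(1-s)(\ln(1-s)+2)=1$, i.e.\ at $\approx 0.357$ (precisely the bound of Theorem~\ref{theorem:approximability_budgeted}). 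The point $s=\alpha\approx 0.357$, $r=-\ln(1-s)$, $\rho=1-r$ satisfies every inequality you list, so no ``simultaneous use'' of them can exceed $0.357$. The coupling you invoke---that $f(S')$ and $f(\sigma\mid S')$ cannot both be large against the same $\OPT$---is not expressed by any constraint you have written down, so the final reduction you defer to is not a computation but the missing proof.

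What the paper adds, and what your proposal lacks, are two qualitatively new relations. First, it tracks the \emph{second} element $o'$ of $\OPT$ discarded for budget violation and the greedy set $Q'$ at that moment, and applies the exponential-decay lemma to the marginal function $f(\cdot\mid Q)$ on the stretch between $Q$ and $Q'$ (Corollary~\ref{corollary:Sg-OPT-2}); your analysis stops at the first discarded element and never exploits that the greedy keeps making progress afterwards. Second, it proves a sharper singleton bound (Corollary~\ref{corollary:v-OPT}) from $f(\OPT)\leq f(\OPT\setminus\{o,o'\})+2f(v^\ast)$ combined with a continuous extension of the decay lemma applied to $\OPT\setminus\{o,o'\}$ (Lemma~\ref{lemma:approx-cost-continuous}); this is where the coefficient $2-1/\e$ in \eqref{eqn:alpha^bot} actually comes from (the ``$2$'' from covering $o$ and $o'$ by $v^\ast$, the ``$1/\e$'' from the decay), not from recombining Wolsey's inequalities. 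These extra constraints enlarge the feasibility system to seven variables, and only then does its minimum rise to $\alpha^\bot>0.405$; establishing that still requires the delicate case analysis of Lemma~\ref{lemma:alpha-bound}, far beyond checking monotonicity of the threshold function. A further small point: your $\sigma$ is the first element discarded overall, whereas the argument needs the first (and second) discarded element \emph{of $\OPT$}, both because the exchange inequality survives longer and because $f(\OPT)\leq f(\OPT\setminus\{o,o'\})+2f(v^\ast)$ requires $o,o'\in\OPT$.
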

To our knowledge, this is the first work giving a constant factor achieved by \MG that is even larger than $(1-1/\sqrt{\e})\approx 0.393$, which not only significantly improves the known factors of $(1-1/\e^\beta)\approx 0.357$ given by \citet{Wolsey_MGreedy_1982} and $(1-1/\e)/2\approx 0.316$ given by \citet{Khuller_BMCP_1999} but also clarifies a long-term misunderstanding regarding the factor of $(1-1/\sqrt{\e})$ in the literature.

	\section{Proof of Theorem~\ref{thm:approx-ratio}}\label{sec:proof}
The key idea of our proof is that we derive several relations governing the solution value and the optimum by carefully characterizing the properties of \MG, and utilize these relations to construct an optimization problem whose optimum is a lower bound on the approximation factor of \MG. Then, it remains to show that the optimum of our newly constructed optimization problem is no less than $0.405$. Our analysis procedure can be used as a general approach for analyzing the approximation guarantees of algorithms. In the following, we first introduce some useful notations and definitions.


\subsection{Notations and Definitions}
To characterize \MG, we denote $u_i$ as the $i$-th element added to the greedy solution $\Sg$ by the greedy heuristic and $S_i:=\{u_1,u_2,\dotsc,u_i\}$ as the first $i$ added elements for any $0\leq i\leq \lvert \Sg\rvert$. Corresponding to the added elements, we denote $A_{i}$ as the element set abandoned due to budget violation until $u_{i}$ is selected for consideration. That is, after the round that $u_i$ is added to the greedy solution, the remaining element set becomes $V^\prime$ and all elements in the set $V\setminus V^\prime$ are either added or abandoned, \ie~$A_{i}=(V\setminus V^\prime)\setminus S_i$, where $V$ is the ground element set.
Based on the sequence $\Sg=\{u_1,u_2,\dotsc\}$ of greedy solution, given any $x\leq c(\Sg)$, we define $j$ as the index such that $c(S_{j})< x \leq c(S_{j+1})$ and further define a continuous extension $F(x)$ of the set function $f(\cdot)$ as   
\begin{equation}\label{eqn:F(x)}
F(x):=f(S_j)+f(u_{j+1}\mid S_j)\cdot \frac{x-c(S_j)}{c(u_{j+1})}.
\end{equation}
Note that if $x=c(S_i)$ for some $i$, we have $j=i-1$ such that $S_{j+1}=S_i$ and hence $F(x)=f(S_i)$. For convenience, we also define $F(0)=0$.

The core relations used in the proof are the relations between the intermediate solutions obtained by the greedy heuristic and the optimal solution, especially for the intermediate greedy solutions obtained at the time when the first and second elements in the optimal solution are considered by the greedy heuristic but not adopted due to budget violation. In particular, let $o$ (resp.\ $o^\prime$) be the first (resp.\ second) element in optimal solution $\OPT$ considered by the greedy heuristic but not added to the element set $\Sg$ due to budget violation,\footnote{If $o$ does not exist, we consider $o$ as a dummy element such that $c(o)=0$ and $\frac{f(o\mid S)}{c(o)}=0$ given any $S$, and so as for $o^\prime$.} and let $Q$ (resp.\ $Q^\prime$) be the element set constructed until $o$ (resp.\ $o^\prime$) is considered by the greedy heuristic. Clearly, $Q\subseteq Q^\prime\subseteq \Sg$ and hence $f(Q)\leq f(Q^\prime)\leq f(\Sg)$ due to the monotonicity of $f$. We later derive several important and useful relations between $f(Q)$ (or $f(Q^\prime)$) and $f(\OPT)$.


\subsection{Main Proof}

We start the proof with a useful relation between $f(S)$ and $f(T)$ for any two sets $S$ and $T$ to characterize the monotone nondecreasing submodular function $f$~\cite{Nemhauser_submodular_1978}. That is, for any monotone nondecreasing submodular set function $f$, we have~\cite{Nemhauser_submodular_1978}
\begin{equation*}
	f(T)\leq f(S)+\sum_{v\in T\setminus S} f(v\mid S).
\end{equation*}
Moreover, making use of the marginal gain $f(T\mid S)$ of adding $T$ to $S$, we further have~\cite{Nemhauser_submodular_1978}
\begin{equation*}
	f(T)\leq f(S)+\sum_{i=1}^{r} f(R_i\mid S),
\end{equation*}
where $\{R_1,\dotsc,R_r\}$ is a partition of $T\setminus S$.

In the following, we provide a lower bound on the function value of the intermediate greedy solution, utilizing the monotonicity and submodularity of $f$ and the greedy rule of \MG.
\begin{lemma}\label{lemma:approx-cost}
	For any monotone nondecreasing submodular (and non-negative) set function $f$, denote $\Ss\subseteq \Sg$ as the intermediate element set constructed by the greedy heuristic after a certain number of iterations, and let $\As$ be the corresponding abandoned element set, \ie~$\As:=A_s$ where $s=\lvert \Ss\rvert$. Given any element set $T$, if $T\cap \As=\emptyset$, we have
	\begin{equation}
	f(\Ss)\geq \Big(1-\e^{-c(\Ss)/c(T)}\Big)\cdot f(T).
	\end{equation}
\end{lemma}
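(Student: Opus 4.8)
The plan is to run the textbook greedy potential-function argument, with the twist that the hypothesis $T\cap\As=\emptyset$ is exactly what licenses the greedy comparison against every element of $T$. Writing $\Ss=S_s$ and $g_i:=f(T)-f(S_i)$ for $0\le i\le s$, I would aim to establish the one-step recursion
\[
	g_i\le\Bigl(1-\tfrac{c(u_i)}{c(T)}\Bigr)\,g_{i-1},
\]
and then unroll it, using the normalization $f(\emptyset)=0$ (so $g_0=f(T)$) together with the elementary inequality $1-x\le\e^{-x}$.

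The crucial first step is an \emph{availability claim}: for every $i\le s$ and every $v\in T\setminus S_{i-1}$, the element $v$ is still in the search space $V'$ at the moment $u_i$ is selected on Line~\ref{alg:MGreedy_argmax}. I would argue this directly from the bookkeeping in the definitions---an element leaves $V'$ only upon being processed, and at that instant it is either added (joining some $S_k$) or abandoned (joining $\As=A_s$). Since $v\in T$ and $T\cap\As=\emptyset$ forbid abandonment during the first $s$ rounds, while $v\notin S_{i-1}$ says $v$ has not yet been added, $v$ must still reside in $V'$ when $u_i$ is chosen. As the greedy set is unchanged and equals $S_{i-1}$ between the additions of $u_{i-1}$ and $u_i$, the argmax rule then yields
\[
	\frac{f(u_i\mid S_{i-1})}{c(u_i)}\ge\frac{f(v\mid S_{i-1})}{c(v)}\qquad\text{for all }v\in T\setminus S_{i-1}.
\]

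Granting this, the recursion is routine: the submodular decomposition quoted above gives $f(T)\le f(S_{i-1})+\sum_{v\in T\setminus S_{i-1}}f(v\mid S_{i-1})$; I would bound each marginal by $c(v)\cdot f(u_i\mid S_{i-1})/c(u_i)$ via the ratio inequality and then apply $\sum_{v\in T\setminus S_{i-1}}c(v)\le c(T)$ (valid since $c$ is additive and nonnegative) to obtain $g_{i-1}\le\frac{c(T)}{c(u_i)}\bigl(f(S_i)-f(S_{i-1})\bigr)=\frac{c(T)}{c(u_i)}(g_{i-1}-g_i)$, which rearranges into the claimed recursion. Telescoping then gives $g_s\le g_0\prod_{i=1}^s\bigl(1-c(u_i)/c(T)\bigr)\le f(T)\,\e^{-\sum_{i}c(u_i)/c(T)}=f(T)\,\e^{-c(\Ss)/c(T)}$, and subtracting this from $f(T)$ delivers the lemma.

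The main obstacle is making the availability claim fully rigorous, \ie pinning down the precise role of $T\cap\As=\emptyset$: one must check that elements of $T$ that happen to be added later as some $u_k$ with $k>i$ are indeed still present at round $i$, and that no element of $T$ is quietly discarded between two consecutive additions. Once availability is secured the remaining inequalities are standard, and I expect the only other place needing a word of care is the cost step $\sum_{v\in T\setminus S_{i-1}}c(v)\le c(T)$.
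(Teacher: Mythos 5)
Your proposal follows essentially the same route as the paper's proof: the submodular decomposition $f(T)\le f(S_{i-1})+\sum_{v\in T\setminus S_{i-1}}f(v\mid S_{i-1})$, the greedy ratio comparison licensed by $T\cap\As=\emptyset$, the cost bound $\sum_{v\in T\setminus S_{i-1}}c(v)\le c(T)$, the one-step contraction, and $1-x\le\e^{-x}$. The availability claim you single out as the main obstacle is exactly the paper's (one-line) justification of the ratio inequality, and your elaboration of it is fine.

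The one step that would actually fail as written is the telescoping. From $g_i\le\bigl(1-\tfrac{c(u_i)}{c(T)}\bigr)g_{i-1}$ you cannot in general conclude $g_s\le g_0\prod_{i}\bigl(1-\tfrac{c(u_i)}{c(T)}\bigr)$, nor $\prod_{i}\bigl(1-\tfrac{c(u_i)}{c(T)}\bigr)\le\e^{-\sum_i c(u_i)/c(T)}$: nothing prevents $c(u_i)>c(T)$ (so some factors are negative, since $T$ is arbitrary and may be much cheaper than the greedy picks) or $f(S_i)>f(T)$ (so some $g_i$ are negative), and chaining or multiplying inequalities through negative quantities reverses them. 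The paper removes this obstruction by first disposing of the case $f(\Ss)\ge f(T)$, where the lemma is immediate because $f(T)\ge 0$ and $1-\e^{-x}\le 1$, and then observing that in the remaining case $g_i=f(T)-f(S_i)\ge f(T)-f(\Ss)\ge 0$ for every $i\le s$ by monotonicity. With $g_{i-1}\ge 0$ in hand, one multiplies $1-x\le\e^{-x}$ through to get $g_i\le\e^{-c(u_i)/c(T)}g_{i-1}$ directly, and chaining is then safe because the exponential factors are positive. Adding this case split (or an equivalent sign argument) closes the gap; everything else in your plan matches the paper's proof.
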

\begin{proof}
	The lemma directly holds when $f(\Ss)\geq f(T)$. In the following, we consider $f(\Ss)\leq f(T)$. By the monotonicity and submodularity of $f$, we have
	\begin{equation*}
		f(T)
		\leq f(S_i)+\sum_{v\in T\setminus S_i} f(v\mid S_i)
		= f(S_i)+\sum_{v\in T\setminus S_i} \Big(c(v)\cdot \frac{f(v\mid S_i)}{c(v)}\Big).
	\end{equation*}	
	According to the greedy rule, for any $i\leq s-1$ and any $v\in T\setminus S_i$, we have 
	\begin{equation*}
		\frac{f(u_{i+1}\mid S_{i})}{c(u_{i+1})}\geq \frac{f(v\mid S_i)}{c(v)},
	\end{equation*}
	since $T\cap \As=\emptyset$. 
	Thus, we can get that
	\begin{equation*}
		f(T)
		\leq f(S_i)+\frac{f(u_{i+1}\mid S_{i})}{c(u_{i+1})}\cdot \sum_{v\in T\setminus S_i} c(v)
		\leq f(S_i)+\frac{f(u_{i+1}\mid S_{i})}{c(u_{i+1})}\cdot c(T).
	\end{equation*}
	Rearranging it yields 
	\begin{equation*}
	f(T)-f(S_{i+1})\leq \Big(1-\frac{c(u_{i+1})}{c(T)}\Big)\cdot\big(f(T)-f(S_{i})\big).
	\end{equation*}
	Moreover, we know that $1-x\leq \e^{-x}$ for any $x\geq 0$, which indicates that \[1-\frac{c(u_{i+1})}{c(T)}\leq \e^{-{c(u_{i+1})}/{c(T)}}.\] Hence, observing that $f(T)-f(S_i)\geq 0$ as $f(T)\geq f(\Ss)\geq f(S_i)$, we have
	\begin{equation*}
		f(T)-f(S_{i+1})\leq \e^{-{c(u_{i+1})}/{c(T)}}\cdot \big(f(T)-f(S_{i})\big).
	\end{equation*}
	Recursively, 
	\begin{align*}
	f(T)-f(\Ss)
	&=f(T)-f(S_s)\\
	&\leq \e^{-{c(u_{s})}/{c(T)}}\cdot \big(f(T)-f(S_{s-1})\big)\\
	&\leq \e^{-{c(u_{s})}/{c(T)}}\cdot \e^{-{c(u_{s-1})}/{c(T)}}\cdot \big(f(T)-f(S_{s-2})\big)\\
	&\leq \dotsb\\
	&\leq \e^{-\sum_{i=0}^{s-1}\frac{c(u_{i+1})}{c(T)}}\cdot f(T)\\
	&=\e^{-c(\Ss)/c(T)}\cdot f(T).
	\end{align*}
	Rearranging it immediately completes the proof.
\end{proof}


Based on Lemma~\ref{lemma:approx-cost}, we can immediately derive a relation between the greedy solution $\Sg$ and the optimal solution $\OPT$ as follows.
\begin{corollary}\label{corollary:Sg-OPT-1}
	The element set $\Sg$ constructed by the greedy heuristic satisfies
	\begin{equation}
		f(\Sg)\geq \Big(1-\e^{-c(Q)/b}\Big)\cdot f(\OPT).
	\end{equation}
\end{corollary}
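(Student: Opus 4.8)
The plan is to instantiate Lemma~\ref{lemma:approx-cost} with the intermediate greedy set $\Ss=Q$ and the test set $T=\OPT$, and then trade $c(\OPT)$ for $b$ using feasibility. Recall that $Q$ is exactly the greedy solution $S_s$ built at the moment the first rejected optimal element $o$ is considered; thus $Q$ is a legitimate intermediate set of the form the lemma requires, with corresponding abandoned set $\As=A_s$ where $s=\lvert Q\rvert$. Once the lemma's hypothesis is in place, the corollary follows from two elementary monotonicity/feasibility observations.

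Concretely, I would proceed in three steps. First, I would verify the disjointness hypothesis $\OPT\cap\As=\emptyset$: since $o$ is by definition the \emph{first} element of $\OPT$ rejected by the greedy heuristic for budget reasons, and $\As=A_s$ collects precisely the elements discarded up to the point the last element of $Q$ is added---that is, strictly before $o$ is ever considered---none of those earlier discarded elements can lie in $\OPT$, giving $\OPT\cap\As=\emptyset$. Applying Lemma~\ref{lemma:approx-cost} then yields $f(Q)\geq\bigl(1-\e^{-c(Q)/c(\OPT)}\bigr)\cdot f(\OPT)$. Second, I would invoke feasibility of the optimal solution, $c(\OPT)\leq b$, so that $c(Q)/c(\OPT)\geq c(Q)/b$ and hence $1-\e^{-c(Q)/c(\OPT)}\geq 1-\e^{-c(Q)/b}$, which weakens the bound in the desired direction to $f(Q)\geq\bigl(1-\e^{-c(Q)/b}\bigr)\cdot f(\OPT)$. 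Third, because $Q\subseteq\Sg$, monotonicity of $f$ gives $f(\Sg)\geq f(Q)$, and chaining the inequalities completes the argument.

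The only genuinely subtle point---and the one I would justify most carefully---is the disjointness claim $\OPT\cap\As=\emptyset$, since everything else is just substitution and monotonicity. The argument hinges on the precise timing encoded in the definitions: $Q=S_s$ is frozen at the instant $o$ is considered, and $A_s$ records only abandonments occurring up to the addition of $u_s$, which all precede the consideration of $o$. Any element abandoned in that window is, by the minimality in the definition of $o$, a non-optimal element, so no element of $\OPT$ appears in $\As$. (Optimal elements may already sit inside $Q$ itself, but that is harmless, as the lemma only forbids optimal elements in the abandoned set.) I expect no obstacle beyond making this timing argument airtight; the rest is a direct consequence of Lemma~\ref{lemma:approx-cost} together with $c(\OPT)\leq b$ and $Q\subseteq\Sg$.
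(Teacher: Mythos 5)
Your proposal is correct and follows essentially the same route as the paper: apply Lemma~\ref{lemma:approx-cost} with $\Ss=Q$ and $T=\OPT$ (the disjointness $\OPT\cap\As=\emptyset$ holding because $o$ is the first optimal element rejected), then weaken via $c(\OPT)\leq b$ and conclude with $Q\subseteq\Sg$ and monotonicity. The extra care you devote to the timing argument for $\OPT\cap\As=\emptyset$ is a sound elaboration of what the paper states in one clause.
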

\begin{proof}
	According to Lemma~\ref{lemma:approx-cost}, letting $\Ss=Q$ such that $\OPT\cap \As=\emptyset$ since no element from $\OPT$ is abandoned due to budget violation before $o$ is considered, we have \[f(Q)\geq \Big(1-\e^{-c(Q)/c(\OPT)}\Big)\cdot f(\OPT).\] Since $Q\subseteq \Sg$ and $c(\OPT)\leq b$, we then complete the proof.
\end{proof}

Corollary~\ref{corollary:Sg-OPT-1} generalizes the result for a special cardinality constraint where the costs of all elements are identical. Specifically, when $c(v)=1$ for every element $v\in V$, the greedy solution $\Sg=Q$ has the same number of elements as the optimal solution $\OPT$, \ie~$c(\Sg)=c(Q)=b$, which immediately gives that $f(\Sg)\geq (1-1/\e)\cdot f(\OPT)$~\cite{Nemhauser_submodular_1978}.

\begin{lemma}\label{lemma:marginal-submodular}
	For any monotone nondecreasing submodular set function $f$, given any set $X\subseteq V$, the marginal function $f(S\mid X):=f(S\cup X)-f(X)$ upon $X$ is also a monotone nondecreasing submodular set function with respect to $S$.
\end{lemma}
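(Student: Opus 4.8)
The plan is to verify the two defining properties directly from the definition $f(S\mid X)=f(S\cup X)-f(X)$, reducing each to the corresponding property of $f$. As a preliminary observation, the resulting function is automatically normalized, since $f(\emptyset\mid X)=f(X)-f(X)=0$.

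First I would establish submodularity. Taking arbitrary $S,T\subseteq V$ and expanding the target inequality $f(S\mid X)+f(T\mid X)\geq f(S\cup T\mid X)+f(S\cap T\mid X)$, every one of the four terms contributes a $-f(X)$, and since two appear on each side they cancel. The claim therefore collapses to
\begin{equation*}
f(S\cup X)+f(T\cup X)\geq f\big((S\cup T)\cup X\big)+f\big((S\cap T)\cup X\big).
\end{equation*}
The decisive step is to read this as the submodularity inequality for $f$ evaluated at the two sets $S\cup X$ and $T\cup X$: their union is $(S\cup T)\cup X$, and---by distributivity of intersection over union---their intersection is $(S\cup X)\cap(T\cup X)=(S\cap T)\cup X$. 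Hence the inequality is exactly the submodularity of $f$ applied to $S\cup X$ and $T\cup X$, and it holds.

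Next I would handle monotonicity, which is more immediate. For $S\subseteq T$ we have $S\cup X\subseteq T\cup X$, so the monotonicity of $f$ gives $f(S\cup X)\leq f(T\cup X)$, and subtracting $f(X)$ from both sides yields $f(S\mid X)\leq f(T\mid X)$, as required. (Equivalently, one could argue both properties at once through the marginal-gain form, using the identity that the gain of adding $v$ to $S$ under $f(\cdot\mid X)$ equals $f(v\mid S\cup X)$; I prefer the set-pair route above since it avoids any case distinction.)

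There is no genuine obstacle in this argument; the only place demanding the slightest attention is the set identity $(S\cup X)\cap(T\cup X)=(S\cap T)\cup X$, which is the hinge that lets the submodularity of $f$ be transferred to the marginal function. Once this is noted, both properties follow by pure substitution.
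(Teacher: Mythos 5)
Your proof is correct and follows essentially the same route as the paper: monotonicity by noting $S\cup X\subseteq T\cup X$, and submodularity by applying the submodularity of $f$ to the pair $S\cup X$, $T\cup X$ together with the identity $(S\cup X)\cap(T\cup X)=(S\cap T)\cup X$. No gaps.
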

\begin{proof}
	Clearly, $f(S\mid X)$ is monotone because for any $S\subseteq T$,
	\begin{equation*}
		f(S\mid X)=f(S\cup X)-f(X)\leq f(T\cup X)-f(X)=f(T\mid X).
	\end{equation*}
	Furthermore, 
	\begin{align*}
		f(S\mid X)+f(T\mid X)
		&=f(S\cup X)-f(X)+ f(T\cup X)-f(X)\\
		&\geq f((S\cup X)\cup (T\cup X))+f((S\cup X)\cap (T\cup X))-2f(X)\\
		&=f((S\cup T)\cup X)-f(X)+f((S\cap T)\cup X)-f(X)\\
		&=f(S\cup T\mid X)+ f(S\cap T\mid X),
	\end{align*}
	which shows the submodularity.
\end{proof}

Based on Lemma~\ref{lemma:marginal-submodular}, we can derive another relation between $\Sg$ and $\OPT$.
\begin{corollary}\label{corollary:Sg-OPT-2}
	Let $\OPT^\prime:=\OPT\setminus (Q\cup\{o\})$. Then,
	\begin{equation}
		f(\Sg)\geq f(Q)+\Big(1-\e^{(c(Q)+c(o^\prime)-b)/c(Q)}\Big)\cdot f(\OPT^\prime\mid Q).
	\end{equation}
\end{corollary}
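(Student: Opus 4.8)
The plan is to reduce the claim to Lemma~\ref{lemma:approx-cost} applied to the \emph{marginal} function past $Q$. Set $g(\cdot):=f(\cdot\mid Q)$, which by Lemma~\ref{lemma:marginal-submodular} is again monotone nondecreasing and submodular. The crucial observation is that the greedy rule is unchanged under this shift: for any candidate $v$ and any partial solution $S$ disjoint from $Q$, the greedy ratio with respect to $g$ is $\frac{g(v\mid S)}{c(v)}=\frac{f(v\mid Q\cup S)}{c(v)}$, which is exactly the ratio that \MG uses while its current solution equals $Q\cup S$. Hence the elements that \MG adds between the moment $Q$ is formed and the moment $o^\prime$ is considered, namely $Q^\prime\setminus Q$, are precisely an intermediate greedy sequence produced by running the greedy heuristic on $g$ from the empty set.

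Next I would invoke Lemma~\ref{lemma:approx-cost} for $g$ with $\Ss=Q^\prime\setminus Q$ and $T=\OPT^\prime$. The hypothesis $T\cap\As=\emptyset$ must be checked, and this is the step that needs care: the abandoned set $\As$ here consists of the elements rejected for budget reasons strictly within the window between $Q$ and $Q^\prime$. By the very definitions of $o$ and $o^\prime$ (the first and second elements of $\OPT$ rejected for budget), the only element of $\OPT$ discarded before $o^\prime$ is considered is $o$ itself; since $\OPT^\prime=\OPT\setminus(Q\cup\{o\})$ excludes both $o$ and everything already in $Q$, no element of $\OPT^\prime$ is ever abandoned during the window, so it stays a live candidate at every window-step. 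The lemma then yields
\[
f(Q^\prime\mid Q)=g(Q^\prime\setminus Q)\ge\Big(1-\e^{-c(Q^\prime\setminus Q)/c(\OPT^\prime)}\Big)\cdot g(\OPT^\prime)=\Big(1-\e^{-c(Q^\prime\setminus Q)/c(\OPT^\prime)}\Big)\cdot f(\OPT^\prime\mid Q).
\]

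The remaining and only genuinely fiddly part is to replace the exponent $-c(Q^\prime\setminus Q)/c(\OPT^\prime)$ by the cleaner $(c(Q)+c(o^\prime)-b)/c(Q)$ stated in the corollary; this is pure cost bookkeeping driven by the two budget-violation events. Rejection of $o^\prime$ at solution $Q^\prime$ gives $c(Q^\prime)+c(o^\prime)>b$, hence $c(Q^\prime\setminus Q)=c(Q^\prime)-c(Q)\ge b-c(o^\prime)-c(Q)$. For the denominator I would show $c(\OPT^\prime)\le c(Q)$: feasibility of the optimum gives $c(\OPT^\prime)\le c(\OPT)-c(o)\le b-c(o)$ (using $o\in\OPT$ and $o\notin Q$), while rejection of $o$ at solution $Q$ gives $c(Q)>b-c(o)$, so $c(\OPT^\prime)\le b-c(o)<c(Q)$. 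This last inequality, which quietly fuses the feasibility of $\OPT$ with the budget violation at $o$, is the conceptual heart of the estimate. Combining the two bounds (and treating the degenerate case $c(\OPT^\prime)=0$, where $f(\OPT^\prime\mid Q)=0$ makes the claim immediate) gives $-\frac{c(Q^\prime\setminus Q)}{c(\OPT^\prime)}\le\frac{c(Q)+c(o^\prime)-b}{c(Q)}$, hence $1-\e^{-c(Q^\prime\setminus Q)/c(\OPT^\prime)}\ge 1-\e^{(c(Q)+c(o^\prime)-b)/c(Q)}$; since $f(\OPT^\prime\mid Q)\ge0$ by monotonicity of $g$, the inequality survives the substitution. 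Finally, $Q^\prime\subseteq\Sg$ and monotonicity give $f(\Sg)-f(Q)\ge f(Q^\prime)-f(Q)=f(Q^\prime\mid Q)$, which chains with the above to deliver the stated bound. The main obstacle is thus not the algebra but the correct tracking of the abandoned set so that Lemma~\ref{lemma:approx-cost} legitimately applies, together with the cost inequality $c(\OPT^\prime)<c(Q)$; one should also confirm that the dummy-element conventions for nonexistent $o,o^\prime$ (which force $Q$ or $Q^\prime$ to coincide with $\Sg$ and make $\OPT^\prime$ trivial to bound) do not break these steps.
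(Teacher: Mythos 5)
Your proposal is correct and follows essentially the same route as the paper's proof: apply Lemma~\ref{lemma:marginal-submodular} to work with $f(\cdot\mid Q)$, invoke Lemma~\ref{lemma:approx-cost} with $\Ss=Q^\prime\setminus Q$ and $T=\OPT^\prime$, and then convert the exponent via $c(Q^\prime)+c(o^\prime)>b$ together with $c(\OPT^\prime)\leq b-c(o)<c(Q)$. The only difference is that you spell out the verification that the shifted greedy process and the abandoned-set condition legitimately satisfy the lemma's hypotheses, which the paper leaves implicit.
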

\begin{proof}
	According to Lemma~\ref{lemma:marginal-submodular}, we know that $f(S\mid Q)$ is a monotone nondecreasing submodular set function with respect to $S$. Then, by Lemma~\ref{lemma:approx-cost}, letting $\Ss=Q^\prime\setminus Q$ such that $\OPT^\prime \cap \As=\emptyset$, we have
	\begin{equation}\label{eqn:f(Q')}
		f(Q^\prime)=f(Q)+f((Q^\prime\setminus Q)\mid Q)\geq f(Q)+\Big(1-\e^{-(c(Q^\prime)-c(Q))/c(\OPT^\prime)}\Big)\cdot f(\OPT^\prime\mid Q).
	\end{equation}
	In addition, we note that $c(\OPT^\prime)+c(o)\leq c(\OPT)\leq b$. Meanwhile, according to the definitions of $o$ and $o^\prime$, we know that $c(Q)+c(o)>b$ and $c(Q^\prime)+c(o^\prime)> b$. As a result,
	\begin{equation}\label{eqn:c(Q')}
		\frac{c(Q^\prime)-c(Q)}{c(\OPT^\prime)}\geq \frac{b-c(o^\prime)-c(Q)}{b-c(o)}\geq\frac{b-c(o^\prime)-c(Q)}{c(Q)}.
	\end{equation}
	Combining \eqref{eqn:f(Q')}, \eqref{eqn:c(Q')} and $f(\Sg)\geq f(Q^\prime)$ completes the proof.
\end{proof}

Both relations between $\Sg$ and $\OPT$ in Corollary~\ref{corollary:Sg-OPT-1} and Corollary~\ref{corollary:Sg-OPT-2} rely on Lemma~\ref{lemma:approx-cost} requiring $\Ss$ being a set so that the value of $c(\Ss)$ is discrete. Next, we make use of $F(\cdot)$ defined in \eqref{eqn:F(x)} to extend Lemma~\ref{lemma:approx-cost} to a continuous version where $\Ss$ is allowed to be a {partial} set so that the value of $c(\Ss)$ is continuous. 
\begin{lemma}\label{lemma:approx-cost-continuous}
	For any monotone nondecreasing submodular (and non-negative) set function $f$, given any real number $x\leq c(\Sg)$ with index $j$ satisfying $c(S_{j})< x \leq c(S_{j+1})$ and any element set $T$, if $T\cap A_{j+1}=\emptyset$, we have
	\begin{equation}
	F(x)\geq \Big(1-\e^{-x/c(T)}\Big)\cdot f(T).
	\end{equation}
\end{lemma}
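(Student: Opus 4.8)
The plan is to reduce to the discrete Lemma~\ref{lemma:approx-cost} for the ``integer part'' of the cost up to $S_j$, and then account separately for the single fractional step from $S_j$ to the point $x$ inside the interval $(c(S_j),c(S_{j+1})]$. Recall that by definition $F(x)=f(S_j)+\frac{x-c(S_j)}{c(u_{j+1})}\cdot f(u_{j+1}\mid S_j)$, so $F(x)$ is a linear interpolation between $f(S_j)$ and $f(S_{j+1})$; in particular $F(x)\geq f(S_j)$ because the marginal gain $f(u_{j+1}\mid S_j)$ is nonnegative and $x\geq c(S_j)$.

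First I would dispose of the easy case: if $F(x)\geq f(T)$, then since $1-\e^{-x/c(T)}\leq 1$ and $f(T)\geq 0$ the claim is immediate. So I may assume $F(x)< f(T)$, which together with $F(x)\geq f(S_j)$ forces $f(S_j)\leq f(T)$; this nonnegativity of $f(T)-f(S_j)$ is what keeps the subsequent chained inequalities from flipping. Next, since $A_j\subseteq A_{j+1}$ the hypothesis $T\cap A_{j+1}=\emptyset$ gives $T\cap A_j=\emptyset$, so I can apply Lemma~\ref{lemma:approx-cost} with $\Ss=S_j$ to obtain $f(T)-f(S_j)\leq \e^{-c(S_j)/c(T)}\cdot f(T)$.

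For the fractional step I would invoke the greedy rule exactly once: because $T\cap A_{j+1}=\emptyset$, every $v\in T\setminus S_j$ is still available when $u_{j+1}$ is selected, hence $\frac{f(u_{j+1}\mid S_j)}{c(u_{j+1})}\geq \frac{f(v\mid S_j)}{c(v)}$; summing over $T\setminus S_j$ with submodularity and $\sum_{v\in T\setminus S_j}c(v)\leq c(T)$ yields $f(T)\leq f(S_j)+\frac{f(u_{j+1}\mid S_j)}{c(u_{j+1})}\cdot c(T)$, i.e.\ $\frac{f(u_{j+1}\mid S_j)}{c(u_{j+1})}\geq \frac{f(T)-f(S_j)}{c(T)}$. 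Substituting this lower bound on the per-cost marginal into the definition of $F(x)$ gives $F(x)\geq f(S_j)+(x-c(S_j))\cdot\frac{f(T)-f(S_j)}{c(T)}$, so that $f(T)-F(x)\leq \bigl(1-\tfrac{x-c(S_j)}{c(T)}\bigr)\bigl(f(T)-f(S_j)\bigr)$.

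Finally, applying $1-t\leq \e^{-t}$ with $t=(x-c(S_j))/c(T)\geq 0$ and then the bound from Lemma~\ref{lemma:approx-cost} gives $f(T)-F(x)\leq \e^{-(x-c(S_j))/c(T)}\cdot \e^{-c(S_j)/c(T)}\cdot f(T)=\e^{-x/c(T)}\cdot f(T)$, and rearranging completes the proof. The only place demanding care is the bookkeeping of the sign of $f(T)-f(S_j)$ together with the single use of the greedy rule at index $j+1$, which is precisely why the hypothesis is stated with $A_{j+1}$ rather than $A_j$; everything else is a one-step continuous adaptation of the recursion already established for Lemma~\ref{lemma:approx-cost}.
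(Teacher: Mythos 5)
Your proposal is correct and follows essentially the same route as the paper's proof: apply the discrete Lemma~\ref{lemma:approx-cost} up to $S_j$, use the greedy rule once at index $j+1$ to lower-bound the per-cost marginal by $\frac{f(T)-f(S_j)}{c(T)}$, and chain the two via $1-t\leq\e^{-t}$. Your explicit handling of the easy case $F(x)\geq f(T)$ and the sign of $f(T)-f(S_j)$ is a welcome bit of extra care that the paper leaves implicit in its ``analogous to Lemma~\ref{lemma:approx-cost}'' remark.
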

\begin{proof}
	The proof is analogous to Lemma~\ref{lemma:approx-cost}. In particular, similar to Lemma~\ref{lemma:approx-cost}, we have
	\begin{equation*}
		f(T)\leq f(S_j)+\frac{f(u_{j+1}\mid S_{j})}{c(u_{j+1})}\cdot c(T).
	\end{equation*}
	Applying it to $F(x)$ gives
	\begin{equation*}
		F(x)=f(S_j)+f(u_{j+1}\mid S_j)\cdot \frac{x-c(S_j)}{c(u_{j+1})}\geq f(S_j)+\big(f(T)-f(S_j)\big)\cdot \frac{x-c(S_j)}{c(T)}.
	\end{equation*}
	Rearranging it yields
	\begin{equation*}
		f(T)-F(x)\leq \Big(1-\frac{x-c(S_j)}{c(T)}\Big)\cdot\big(f(T)-f(S_{j})\big)\leq \e^{-(x-c(S_j))/c(T)}\cdot \big(f(T)-f(S_{j})\big).
	\end{equation*}
	In addition, by Lemma~\ref{lemma:approx-cost}, we directly have
	\begin{equation}
	f(T)-f(S_{j})\leq \e^{-{c(S_j})/{c(T)}}\cdot f(T).
	\end{equation}
	Putting it together gives
	\begin{equation*}
		f(T)-F(x)\leq \e^{-{x}/{c(T)}}\cdot f(T),
	\end{equation*}
	which completes the proof by rearranging it.
\end{proof}

Based on Lemma~\ref{lemma:approx-cost-continuous}, we derive a relation between the element $v^\ast$ that maximizes $f(\{v\})$ and the optimal solution $\OPT$.
\begin{corollary}\label{corollary:v-OPT}
	The element $v^\ast$ with the largest function value satisfies 
	\begin{equation*}
		f(v^\ast)\geq \frac{f(\OPT)}{2}+\frac{1}{2(1-1/\e)}\cdot \Big(f(o\mid Q)\cdot \frac{c(Q)+c(o)+c(o^\prime)-b}{c(o)}-f(Q)\Big).
	\end{equation*}
\end{corollary}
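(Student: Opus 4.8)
The plan is to lower-bound $f(v^\ast)$ by a single-element value and then trade that value against $f(\OPT)$ using the continuous extension. Since $v^\ast$ maximizes $f(\{v\})$, I would start from the elementary inequalities $f(v^\ast)\ge f(\{o\})$ and $f(v^\ast)\ge f(\{o'\})$, together with the submodular consequence $f(\{o\})\ge f(o\mid Q)$. Writing $\rho:=f(o\mid Q)/c(o)$ for the cost-effectiveness ratio that $o$ attains at the moment it is rejected, the goal becomes to show that $f(o\mid Q)$ cannot be too small relative to $f(\OPT)$ and to pay for the budget that $o$ and $o'$ overflow.

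The core step is an overflow argument built on Lemma~\ref{lemma:approx-cost-continuous}. Because $o$ is the \emph{first} element of $\OPT$ rejected for budget reasons, at the instant it is examined the greedy set is exactly $Q$, no element of $\OPT$ has yet been abandoned (so the relevant abandoned set is disjoint from $\OPT$), and $o$ has the largest ratio among all remaining elements, namely $\rho$. I would therefore append $o$ fractionally to $Q$ and rerun the recursion of Lemma~\ref{lemma:approx-cost-continuous} with $T=\OPT$ along this extended trajectory. Since $c(Q)+c(o)>b$ while $c(\OPT)\le b$, the fractional inclusion of $o$ fills the knapsack precisely at budget $b$; evaluating the resulting continuous bound at $x=b$ gives $f(Q)+f(o\mid Q)\cdot\frac{b-c(Q)}{c(o)}\ge(1-1/\e)f(\OPT)$. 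This single inequality is the source of the factor $1-1/\e$, and rewriting $b-c(Q)=c(o)-(c(Q)+c(o)-b)$ already exposes the combination $f(o\mid Q)\cdot\frac{c(Q)+c(o)-b}{c(o)}-f(Q)$ that appears in the claim.

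To introduce the second rejected element I would repeat the bookkeeping at $o'$, this time passing to the marginal function $f(\cdot\mid Q)$, which is monotone submodular by Lemma~\ref{lemma:marginal-submodular}; on the sub-trajectory from $Q$ to $Q'$ the relevant target is $\OPT\setminus(Q\cup\{o\})$, for which the disjointness hypothesis $T\cap\As=\emptyset$ holds because $o'$ is only the \emph{second} rejected optimal element. Using $Q\subseteq Q'$ and $f(Q)\le f(Q')$, the fact that the greedy cost-effectiveness ratios are non-increasing along the execution (so the ratio of $o'$ at $Q'$ is at most $\rho$), and the second overflow inequality $c(Q')+c(o')>b$, I would charge the extra cost $c(o')$ and merge it with the leftover $c(Q)+c(o)-b$ of $o$ to form the coefficient $\frac{c(Q)+c(o)+c(o')-b}{c(o)}$. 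I expect the factor $\tfrac12$ to emerge from combining the two single-element bounds through $2f(v^\ast)\ge f(\{o\})+f(\{o'\})$, after which division by $1-1/\e$ produces the stated $\frac{1}{2(1-1/\e)}$.

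The main obstacle is precisely this coupling of $o$ and $o'$. The execution only directly controls $\rho$, the ratio of $o$ at $Q$, whereas $c(o')$ must enter the final bound with the correct sign and weight; getting this right means reconciling two evaluations of the continuous bound---one ending at the fractional copy of $o$, the other carrying the budget slack created by $o'$---so that the cost terms collapse exactly into $c(Q)+c(o)+c(o')-b$ rather than into a looser surrogate. Tracking these constants exactly, instead of estimating $\e^{-b/c(\OPT)}$ or $c(\OPT)$ crudely, is the delicate part, and is where a naive combination would forfeit the improvement and fall back to a weaker factor.
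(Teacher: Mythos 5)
Your first overflow inequality $f(Q)+f(o\mid Q)\cdot\frac{b-c(Q)}{c(o)}\ge(1-1/\e)f(\OPT)$ is correct (it is the classical Wolsey/Khuller bound), but it is not the inequality this corollary rests on, and the plan for converting it into the stated bound has a genuine gap. Multiplying the claim by $2(1-1/\e)$ and then using your inequality to eliminate $(1-1/\e)f(\OPT)$ leaves you needing $2(1-1/\e)f(v^\ast)\geq f(o\mid Q)\bigl(1+\frac{c(o^\prime)}{c(o)}\bigr)$, which is false in general: the execution only tells you that $o$ has the \emph{largest} cost-effectiveness at $Q$, so $f(o\mid Q)\cdot\frac{c(o^\prime)}{c(o)}\geq f(o^\prime\mid Q)$ --- the wrong direction --- and nothing prevents $c(o^\prime)\gg c(o)$. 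Your inequality is simply too lossy by the amount $\frac{f(o\mid Q)}{c(o)}\cdot(c(o)+c(o^\prime))$. The proposed repair, a second pass at $Q^\prime$ with the marginal function, imports $f(Q^\prime)-f(Q)$ and $f(\OPT\setminus(Q\cup\{o\})\mid Q)$, quantities that do not occur in the corollary and cannot be cancelled without further relations (that machinery belongs to Corollary~\ref{corollary:Sg-OPT-2}, not this one); the step where you ``charge the extra cost $c(o^\prime)$ and merge it'' is precisely the part that is asserted rather than proved, and the factor $\tfrac12$ does not emerge from $2f(v^\ast)\geq f(\{o\})+f(\{o^\prime\})$ in this scheme.

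The paper avoids all of this by choosing a different target for a \emph{single} application of Lemma~\ref{lemma:approx-cost-continuous}: it sets $\OPT^\diamond:=\OPT\setminus\{o,o^\prime\}$ and evaluates at $x^\diamond=c(\OPT^\diamond)$ exactly, so the exponential bound collapses to $F(x^\diamond)\geq(1-1/\e)f(\OPT^\diamond)$ with no slack. The term $c(o^\prime)$ enters purely through the budget arithmetic $x^\diamond\leq b-c(o)-c(o^\prime)$ in the companion \emph{upper} bound $F(x^\diamond)\leq f(Q)+\frac{f(o\mid Q)}{c(o)}\cdot\bigl(x^\diamond-c(Q)\bigr)$, and the coefficient $2f(v^\ast)$ comes from $f(\OPT)\leq f(\OPT^\diamond)+f(o\mid\OPT^\diamond)+f(o^\prime\mid\OPT^\diamond)\leq f(\OPT^\diamond)+2f(v^\ast)$, not from any bookkeeping at $Q^\prime$, which the proof never touches. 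If you want to salvage your route, rerun your overflow argument with target $T=\OPT^\diamond$ and stop at $x=c(\OPT^\diamond)$ rather than extending to $x=b$ with $T=\OPT$; as written, the proposal does not close.
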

\begin{proof}
	Define $\OPT^\diamond:=\OPT\setminus\{o,o^\prime\}$ and $x^\diamond:=c(\OPT^\diamond)$. Observing that no element from $\OPT^\diamond$ is abandoned so far, by Lemma~\ref{lemma:approx-cost-continuous}, we have
	\begin{equation*}
		F(x^\diamond)\geq \Big(1-\e^{-x^\diamond/c(\OPT^\diamond)}\Big)\cdot f(\OPT^\diamond)=(1-1/\e)\cdot f(\OPT^\diamond).
	\end{equation*}
	Meanwhile, letting $j$ be the corresponding index of $x^\diamond$ with respect to the definition of $F(\cdot)$, \ie~$c(S_j)<x^\diamond\leq c(S_{j+1})$, due to monotonicity, submodularity and the greedy rule, we have
	\begin{align*}
		F(x^\diamond)
		&\leq f(Q)+\sum_{v\in S_j\setminus Q} f(v\mid Q)+ f(u_{j+1}\mid Q)\cdot \frac{x^\diamond-c(S_j)}{c(u_{j+1})}\\
		&\leq f(Q)+\frac{f(o\mid Q)}{c(o)}\cdot \Big(\sum_{v\in S_j\setminus Q}c(v) +\big(x^\diamond-c(S_j)\big)\Big)\\
		&=f(Q)+\frac{f(o\mid Q)}{c(o)}\cdot\big(x^\diamond-c(Q)\big).
	\end{align*}
	Observing that $x^\diamond\leq b-c(o)-c(o^\prime)$, we can further get that
	\begin{equation*}
		F(x^\diamond)\leq f(Q)+\frac{f(o\mid Q)}{c(o)}\cdot\big(b-c(o)-c(o^\prime)-c(Q)\big).
	\end{equation*}
	In addition, 
	\begin{equation*}
		f(\OPT)\leq f(\OPT^\diamond)+f(o\mid\OPT^\diamond)+f(o^\prime\mid\OPT^\diamond)\leq f(\OPT^\diamond)+2f(v^\ast).
	\end{equation*}
	Putting it together, we have
	\begin{equation*}
		(1-1/\e)\cdot \big(f(\OPT)-2f(v^\ast)\big)\leq f(Q)+\frac{f(o\mid Q)}{c(o)}\cdot\big(b-c(o)-c(o^\prime)-c(Q)\big).
	\end{equation*}
	Rearranging it completes the proof.
\end{proof}

Now, we are ready to derive a lower bound on the worst-case approximation of \MG by solving the following optimization problem.
\begin{lemma}\label{lemma:ratio-alpha}
	It holds that $f(\Sm)\geq \alpha^\ast \cdot f(\OPT)$, where $\alpha^\ast $ is the minimum of the following optimization problem with respect to $\alpha,x_1,x_2,x_3,x_4,x_5,x_6$.
	\begin{align}
	\min\quad &\alpha\label{obj}\\
	\text{s.t.\quad} 
	& \alpha \geq x_1,\label{cons:lower-fs}\\
	& \alpha \geq x_1+(1-\e^{(x_4+x_6-1)/x_4})x_3,\label{cons:lower-fs-plus}\\
	& \alpha \geq x_2,\label{cons:lower-fu}\\
	& x_1 \geq (1-1/\e)(1-2\alpha)+(x_4+x_5+x_6-1)x_2/x_5,\label{cons:fs-lower-1}\\
	& x_1 \geq 1-\e^{-x_4},\label{cons:fs-lower-2}\\
	& x_1+x_2+x_3 \geq 1,\label{cons:sub1}\\
	& x_1 +x_2/x_5\geq 1,\label{cons:sub2}\\
	& x_4+x_5\geq 1,\label{cons:budget}\\
	& \alpha,x_1,x_2,x_3,x_4,x_5,x_6\in [0,1].\label{cons:domain-all}
	\end{align}
\end{lemma}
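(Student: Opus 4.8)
The plan is to turn the optimization program \eqref{obj}--\eqref{cons:domain-all} into a \emph{relaxation} of the worst case: for an arbitrary instance I would exhibit one explicit feasible point whose objective value equals $f(\Sm)/f(\OPT)$. Since $\alpha^\ast$ is by definition the minimum of the program, feasibility of this point immediately yields $f(\Sm)/f(\OPT)\geq \alpha^\ast$, i.e.\ $f(\Sm)\geq \alpha^\ast f(\OPT)$, which is the claim. Concretely I would set $\alpha=f(\Sm)/f(\OPT)$, $x_1=f(Q)/f(\OPT)$, $x_2=f(o\mid Q)/f(\OPT)$, $x_3=f(\OPT^\prime\mid Q)/f(\OPT)$, $x_4=c(Q)/b$, $x_5=c(o)/b$ and $x_6=c(o^\prime)/b$, so that the budget $b$ and the optimum $f(\OPT)$ are effectively normalized to $1$. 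The entire proof then reduces to checking that each of the nine constraints is a valid inequality for this assignment.

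The verification splits along the structure of the constraints. The three constraints of the form $\alpha\geq\cdots$ all follow from $\Sm$ being the better of $\Sg$ and $\{v^\ast\}$ together with $f(\Sg)\geq f(Q)$ (monotonicity, since $Q\subseteq\Sg$): \eqref{cons:lower-fs} is $\alpha\geq f(\Sg)/f(\OPT)\geq x_1$; \eqref{cons:lower-fu} is $\alpha\geq f(v^\ast)/f(\OPT)\geq f(o\mid Q)/f(\OPT)=x_2$ by maximality of $v^\ast$; and \eqref{cons:lower-fs-plus} is exactly Corollary~\ref{corollary:Sg-OPT-2} after dividing by $f(\OPT)$ and $b$. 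The two lower bounds on $x_1$ are the remaining corollaries: \eqref{cons:fs-lower-2} is Corollary~\ref{corollary:Sg-OPT-1} in the form $f(Q)\geq(1-\e^{-c(Q)/b})f(\OPT)$, and \eqref{cons:fs-lower-1} is obtained by rearranging Corollary~\ref{corollary:v-OPT} to isolate $f(Q)$ and then replacing $f(v^\ast)/f(\OPT)$ with $\alpha$. The two submodular relations both stem from $f(\OPT)\leq f(Q)+f(\OPT\setminus Q\mid Q)$: writing $\OPT\setminus Q=\{o\}\cup\OPT^\prime$ and using subadditivity of marginals gives \eqref{cons:sub1}, while bounding each marginal $f(v\mid Q)$ with $v\in\OPT\setminus Q$ by the greedy-optimal ratio $f(o\mid Q)/c(o)$ and summing over costs totalling at most $b$ gives \eqref{cons:sub2}. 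Finally, \eqref{cons:budget} is the budget-violation inequality $c(Q)+c(o)>b$ that defines $o$, and \eqref{cons:domain-all} holds because $Q$, $\{o\}$, $\OPT^\prime$, $\{v^\ast\}$ and $\OPT$ are all feasible and $f$ is monotone, so every ratio lies in $[0,1]$.

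The most delicate step is \eqref{cons:fs-lower-1}, where I would track the sign of the coefficient $x_4+x_5+x_6-1$: it is nonnegative precisely because of \eqref{cons:budget}, so that substituting the larger $\alpha$ for $f(v^\ast)/f(\OPT)$ (legitimate since $1-1/\e>0$ and $1-2\alpha\leq 1-2f(v^\ast)/f(\OPT)$) weakens the right-hand side in the correct direction and preserves the inequality. The main obstacle, however, is the degenerate case where $o$ (or $o^\prime$) does not exist: the dummy convention $c(o)=0$ forces $x_5=0$ and makes the ratio $x_2/x_5$ ill-defined. I would dispatch this separately—if no element of $\OPT$ is ever rejected then $\OPT\subseteq\Sg$, whence $f(\Sm)\geq f(\Sg)\geq f(\OPT)$ and the bound holds trivially with $\alpha=1\geq\alpha^\ast$, while the case in which only $o^\prime$ is missing is covered by the same formulas with $x_6=0$ (no division by $x_6$ occurs). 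Once this trivial case is isolated, all costs appearing in denominators are strictly positive and the constraint-by-constraint verification above goes through unchanged.
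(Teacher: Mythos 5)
Your proposal is correct and follows essentially the same route as the paper: you exhibit the same feasible point $\alpha=f(\Sm)/f(\OPT)$, $x_1=f(Q)/f(\OPT)$, $x_2=f(o\mid Q)/f(\OPT)$, $x_3=f(\OPT^\prime\mid Q)/f(\OPT)$, $x_4=c(Q)/b$, $x_5=c(o)/b$, $x_6=c(o^\prime)/b$ and verify the nine constraints from Corollaries~\ref{corollary:Sg-OPT-1}, \ref{corollary:Sg-OPT-2}, \ref{corollary:v-OPT}, the submodularity relations, and the budget violation, exactly as in Appendix~\ref{sec:appendix}. Your explicit handling of the dummy-element case ($c(o)=0$) is a small extra care the paper leaves to its footnote convention, but it does not change the argument.
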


The key idea to prove Lemma~\ref{lemma:ratio-alpha} is that we consider $\alpha=\frac{f(\Sm)}{f(\OPT)}$, $x_1=\frac{f(Q)}{f(\OPT)}$, $x_2=\frac{f(o\mid Q)}{f(\OPT)}$, $x_3=\frac{ f(\OPT^\prime\mid Q)}{f(\OPT)}$, $x_4=\frac{c(Q)}{b}$, $x_5=\frac{c(o)}{b}$, $x_6=\frac{c(o^\prime)}{b}$, and show that they satisfy all the constraints of \eqref{cons:lower-fs}--\eqref{cons:domain-all} according to the relations between the solution value and the optimum derived above, and the budget constraint. This immediately gives that $f(\Sm)\geq \alpha^\ast \cdot f(\OPT)$. Interested readers are referred to Appendix~\ref{sec:appendix} for details of all the missing proofs. 

\begin{lemma}\label{lemma:alpha-bound}
	We have $\alpha^\ast\geq\alpha^\bot$, where  $\alpha^\bot$ is defined in \eqref{eqn:alpha^bot} of Theorem~\ref{thm:approx-ratio}.
\end{lemma}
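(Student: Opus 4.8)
The plan is to show that every point feasible for \eqref{cons:lower-fs}--\eqref{cons:domain-all} satisfies $\alpha\ge\alpha^\bot$. Set $h(\alpha):=(1-\alpha)\ln(1-\alpha)+(2-1/\e)(1-2\alpha)$, so $\alpha^\bot$ is its root by \eqref{eqn:alpha^bot}. Since $h'(\alpha)=-\ln(1-\alpha)-5+2/\e<0$ on $[0,\tfrac12]$ while $h(0)=2-1/\e>0$ and $h(\tfrac12)=\tfrac12\ln\tfrac12<0$, the map $h$ is strictly decreasing there with unique root $\alpha^\bot\in(0,\tfrac12)$; hence $\alpha\ge\alpha^\bot$ is equivalent to $h(\alpha)\le 0$, and it suffices to derive $h(\alpha)\le 0$ from the constraints. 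Write $c_0:=1-1/\e$ and $t:=x_4+x_5+x_6$ (so $t\ge1$ by \eqref{cons:budget}). Two cases are immediate. If $x_1+x_2>1$, then $x_1\le\alpha$ and $x_2\le\alpha$ from \eqref{cons:lower-fs} and \eqref{cons:lower-fu} force $\alpha>\tfrac12\ge\alpha^\bot$. Moreover, adding \eqref{cons:fs-lower-1} to $(t-1)$ times \eqref{cons:sub2} eliminates $x_2$ and gives $t\,x_1\ge c_0(1-2\alpha)+(t-1)$, i.e.\ $x_1\ge 1-\bigl(1-c_0(1-2\alpha)\bigr)/t$; with $x_1\le\alpha$ this yields the cost bound $t\le\bigl(1-c_0(1-2\alpha)\bigr)/(1-\alpha)$.

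If in addition $x_4+x_6>1$, then $t>1+x_5\ge 2-x_4\ge 2+\ln(1-\alpha)$, where the last steps use \eqref{cons:budget} and the consequence $x_4\le-\ln(1-\alpha)$ of \eqref{cons:fs-lower-2} together with $x_1\le\alpha$. Chaining $2+\ln(1-\alpha)<t\le\bigl(1-c_0(1-2\alpha)\bigr)/(1-\alpha)$ and clearing the positive factor $(1-\alpha)$ rearranges exactly to $h(\alpha)<0$, so $\alpha>\alpha^\bot$ here. The principal case is therefore $x_1+x_2\le 1$ and $x_4+x_6\le 1$, so the factor $\gamma:=1-\e^{(x_4+x_6-1)/x_4}\in[0,1)$ of \eqref{cons:lower-fs-plus} is nonnegative.

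The key structural observation for this case is that, for fixed cost variables $(x_4,x_5,x_6)$, the constraints \eqref{cons:lower-fs-plus}--\eqref{cons:sub2} are linear in $(\alpha,x_1,x_2,x_3)$, so the inner minimization is a linear program. I would solve it at the vertex where \eqref{cons:lower-fs-plus} (with \eqref{cons:sub1} supplying $x_3=1-x_1-x_2$), \eqref{cons:fs-lower-1} and \eqref{cons:sub2} are tight, expressing $\alpha$ implicitly in terms of the costs; crucially \eqref{cons:lower-fu} is \emph{inactive}, so $x_2$ is pinned by \eqref{cons:sub2} rather than by $\alpha$ -- replacing it by $x_2\le\alpha$ is too lossy and does not reproduce $\alpha^\bot$. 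Minimizing the inner optimum over the costs then drives \eqref{cons:fs-lower-2} and \eqref{cons:budget} to be tight, and the two resulting relations become $\alpha=1-\e^{-x_4}\bigl(1-\gamma(1-x_5)\bigr)$ and $t=\bigl(1-c_0(1-2\alpha)\bigr)\e^{x_4}$. Parametrizing by $\theta:=(x_4+x_6-1)/x_4\le 0$ (so $\gamma=1-\e^{\theta}$, $x_6=1-x_4(1-\theta)$) and eliminating $x_5$ via the tight \eqref{cons:budget}, I would show the minimizer is the boundary point $\theta=0$, i.e.\ $x_6=1-x_4$, where the factor of \eqref{cons:lower-fs-plus} vanishes and \eqref{cons:lower-fs-plus} reduces to \eqref{cons:lower-fs}. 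Differentiating the two relations along the constraint surface gives $\frac{d\alpha}{d\theta}\big|_{\theta=0}=\e^{-x_4}x_4(x_4-2)/(3+2c_0-x_4)<0$, so on $\theta\le0$ the minimum sits at $\theta=0$; setting $\theta=0$ with $x_5=1-x_4$ and $x_4=-\ln(1-\alpha)$ collapses the pair of relations to the single equation \eqref{eqn:alpha^bot}, giving $\alpha\ge\alpha^\bot$ and completing the proof.

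The hard part is precisely this last, nonconvex minimization over the cost variables. Two points demand care: justifying the active set of the inner LP -- in particular that \eqref{cons:lower-fu} is slack, which is exactly what produces the $\ln(1-\alpha)$ term and hence $\alpha^\bot$ rather than a weaker constant -- and upgrading the stationarity computation at $\theta=0$ to a genuine global statement that $\alpha(\theta)\ge\alpha(0)=\alpha^\bot$ for all admissible $\theta\le0$, ruling out interior critical points and the remaining box boundaries. The two easy cases and the monotonicity of $h$ are routine by comparison.
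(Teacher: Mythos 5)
Your treatment of the case $x_4+x_6>1$ is correct and complete: the chain $2+\ln(1-\alpha)<t\le\bigl(1-c_0(1-2\alpha)\bigr)/(1-\alpha)$, obtained from \eqref{cons:fs-lower-1}, \eqref{cons:sub2}, \eqref{cons:fs-lower-2} and \eqref{cons:budget}, rearranges exactly to $h(\alpha)<0$, which is the same inequality the paper derives in its first case (there stated as $x_4+x_6-1\ge 0$) by a slightly different linear combination of the same constraints. You have also correctly identified the extremal configuration of the remaining case ($\gamma=0$, $x_1=\alpha$, $x_4=-\ln(1-\alpha)$, $x_2=(1-\alpha)(1+\ln(1-\alpha))$); this is indeed where \eqref{eqn:alpha^bot} comes from, and your $x_2$ value matches the quantity $x^\bot_2$ that appears in the paper's argument.

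For the principal case $x_4+x_6\le 1$, however, there is a genuine gap, which you acknowledge yourself. You never establish (i) that the inner LP attains its minimum at the vertex you select, with \eqref{cons:lower-fu} slack and \eqref{cons:lower-fs-plus}, \eqref{cons:fs-lower-1}, \eqref{cons:sub2} active, nor (ii) that the outer, nonconvex minimization over the cost variables is \emph{globally} minimized at $\theta=0$: the sign of $\frac{d\alpha}{d\theta}$ at $\theta=0$ is only a local stationarity check and does not exclude interior critical points, the other box boundaries, or a dip of $\alpha(\theta)$ for $\theta$ far below $0$. This case is the bulk of the lemma. The paper proves it by contradiction: assuming $\alpha<\alpha^\bot$, it combines \eqref{cons:lower-fs-plus} with \eqref{cons:sub1} to get $x_4+x_6-1\ge \ln\bigl(\tfrac{1-\alpha^\bot-x_2}{1-x_1-x_2}\bigr)\cdot x_4$, feeds this into \eqref{cons:fs-lower-1}, bounds $x_4$ by $-\ln(1-\alpha^\bot)$, and then runs a sequence of explicit monotonicity arguments on auxiliary functions $g$, $\tilde g$, $\hat g$ (including the numerical facts $x_1>0.32$ and $\tilde g(0.32)>0$) to collapse everything to \eqref{eqn:alpha^bot} and reach $0>0$. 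Some argument of this kind, or a genuinely completed global analysis of your two-stage minimization, is required; as written, your proposal proves the lemma only in the easy case.
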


\begin{proof}[Proof of Theorem~\ref{thm:approx-ratio}]
	Combining Lemmas~\ref{lemma:ratio-alpha} and \ref{lemma:alpha-bound} immediately leads to Theorem~\ref{thm:approx-ratio}.
\end{proof}

\vspace{-2mm}

\subsection{Discussion}
Our analysis procedure consists of three steps\textemdash (i) deriving relations between the objective value and the optimum, (ii) leveraging these relations to construct an optimization problem involving the approximation guarantee, and (iii) solving the optimization problem to obtain a lower bound on the approximation. For step (i), our results that utilize cost function (\eg~Lemma~\ref{lemma:approx-cost}) and continuous extension (\eg~Lemma~\ref{lemma:approx-cost-continuous}) are useful to characterize the relations between the objective value and the optimum. For step (ii), one may add more constraints to the optimization problem so that a tighter approximation factor may be obtained by step (iii). These approaches provide insights on further study of approximation analysis for submodular optimization.

	\section{Data-Dependent Upper Bound}\label{sec:problem2}
The constant approximation factor $0.405$ established above gives a lower bound on the worst-case solution quality over all problem instances. In this section, we enhance the modified greedy algorithm to derive a data-dependent upper bound on the optimum. The upper bound allows us to obtain a potentially tighter data-dependent ratio between the solution value of modified greedy and the optimum for individual problem instances.

Specifically, given a set $S$, let $v_1,v_2,\dotso$ be the sequence of elements in $V\setminus S$ in the descending order of $\frac{f(v\mid S)}{c(v)}$. Let $r$ be the lowest index such that the total cost of the elements $\{v_1,v_2,\dotsc,v_r\}$ is larger than $b$, i.e., \[c^*:=\sum_{i=1}^{r-1}c(v_i)\leq b \quad\text{and}\quad c^*+c(v_r)>b.\] We define $\Delta(b\mid S)$ as
\begin{equation}\label{eq:max_marginal_budgeted}
\Delta(b\mid S):=\sum\nolimits_{i=1}^{r-1}f(v_i\mid S)+f(v_r\mid S)\cdot\frac{b-c^*}{c(v_r)},
\end{equation}
which is an upper bound on the largest marginal gain on top of $S$ subject to the budget $b$. Specifically, let $w_i=f(v_i\mid S)$ and $c_i=c(v_i)$, then $\Delta(b\mid S)$ is the optimum of a linear program 
\begin{equation*}
	\max\sum_{i}{w_ix_i}\text{ s.t.\@ }\sum_{i}c_ix_i\leq b \text{ and }  0\leq x_i\leq 1 \text{ for every } i.
\end{equation*}
On the other hand, the largest marginal gain $\max_{c(T)\leq b}\sum_{v\in T}f(v\mid S)$ is the optimum of the corresponding integer linear program, \ie
\begin{equation*}
	\max\sum_{i}{w_ix_i}\text{ s.t.\@ }\sum_{i}c_ix_i\leq b \text{ and } x_i\in\{0,1\} \text{ for every } i.
\end{equation*}
Thus, $\Delta(b\mid S)$ is an upper bound on $\max_{c(T)\leq b}\sum_{v\in T}f(v\mid S)$. Observe that $\sum_{v\in \OPT\setminus S}f(v\mid S)$ is no more than the latter. Therefore, we have 
\begin{equation}\label{eq:upperbounds_budgeted}
	f(S)+\Delta(b\mid S)\geq f(\OPT\cup S)\geq f(\OPT).
\end{equation}
To incorporate into \MG, we choose the smallest upper bound $\Lambda$ over all the intermediate sets constructed by the greedy heuristic, \ie
\begin{equation}\label{eqn:Lambda}
\Lambda:=\min_{i}\{f(S_i)+\Delta(b\mid S_i)\},
\end{equation}
where $S_i$ contains the first $i$ elements added to $\Sg$ by the greedy heuristic. Apparently, $\Lambda$ is an upper bound of $f(\OPT)$. Algorithm~\ref{alg:UB} presents the algorithm based on \MG, which slightly modifies the algorithm by simply adding two lines for calculating the upper bound $\Lambda$ (Lines~\ref{alg:UB_initilization} and \ref{alg:UB_updatebound}). In each iteration of the greedy heuristic, it takes $O(n)$ time to find $u$ and $O(n\log n)$ to sort the elements and compute the upper bound. Thus, the above enhancement increases the time complexity of modified greedy by a multiplicative factor of $\log n$ only. Next, we show that $\Lambda$ is guaranteed to be smaller than $\frac{f(\Sm)}{0.357}$.

\begin{algorithm}[!t]
	\caption{\textsc{MGreedyUB}}\label{alg:UB}
	initialize $\Sg\gets\emptyset$, $V'\gets V$\;\label{alg:DB_filter}
	$\Lambda\leftarrow f(\Sg)+\Delta(b\mid \Sg)$\label{alg:UB_initilization}\tcp*{for bound}
	\While{$V'\neq \emptyset$}{
		find $u\gets\arg\max_{v\in V'}\big\{\frac{f(v\mid \Sg)}{c(v)}\big\}$\;\label{alg:UB_argmax}
		\If{$c(S)+c(u)\leq b$\label{alg:UB_constraint}}{
			$\Sg\gets \Sg\cup\{u\}$\;\label{alg:UB_update}
			\lIf{$\Lambda>f(\Sg)+\Delta(b\mid \Sg)$}{update the upper bound $\Lambda\leftarrow f(\Sg)+\Delta(b\mid \Sg)$\label{alg:UB_updatebound}}
		}
		update the search space $V'\leftarrow V'\setminus\{u\}$\;\label{alg:UB_reducespace}
	}
	$v^\ast\gets \arg\max_{v\in V} f(v)$\;\label{alg:UB_solution1}
	$\Sm\gets\arg\max_{S\in\{\{v^*\},\Sg\}}f(S)$\;\label{alg:UB_solution}
	\Return{$\Sm$ and $\Lambda$}\;\label{alg:UB_eturn}
\end{algorithm}

\begin{theorem}\label{theorem:approximability_budgeted}
	Let $\alpha^\prime$ be the root of \[(1-\alpha^\prime)\cdot (\ln(1-\alpha^\prime)+2)-1=0\] satisfying $\alpha^\prime>0.357$. We have 
	\begin{equation}
		f(\OPT)\leq \Lambda\leq \frac{f(\Sm)}{\alpha^\prime}\leq \frac{f(\OPT)}{\alpha^\prime}.
	\end{equation}
\end{theorem}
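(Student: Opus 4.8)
The plan is to prove the chain from the inside out. The leftmost inequality $f(\OPT)\le\Lambda$ and the rightmost $f(\Sm)/\alpha^\prime\le f(\OPT)/\alpha^\prime$ are immediate: the former is exactly \eqref{eq:upperbounds_budgeted} together with the definition \eqref{eqn:Lambda} of $\Lambda$ as a minimum, and the latter follows from $f(\Sm)\le f(\OPT)$ since $\Sm$ is feasible. Everything therefore reduces to the middle inequality $f(\Sm)\ge\alpha^\prime\Lambda$. As a sanity check I would first record that $\alpha^\prime$ coincides with Wolsey's constant $1-1/\e^\beta$: writing $1-\alpha^\prime=\e^{-\beta}$ turns the defining equation $(1-\alpha^\prime)(\ln(1-\alpha^\prime)+2)-1=0$ into $\e^{-\beta}(2-\beta)=1$, which is precisely $\e^\beta=2-\beta$. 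Thus the claim strengthens the classical $0.357$ guarantee by replacing $f(\OPT)$ with the (generally larger) data-dependent bound $\Lambda$.

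Since $\Lambda\le f(S_i)+\Delta(b\mid S_i)$ for every $i$, it suffices to control a single conveniently chosen prefix. I would take $k$ to be the index of the greedy solution just before \MG first rejects an element for violating the budget (if no such rejection ever occurs then $\Sg=V$, so $\Delta(b\mid\Sg)=0$ and $\Lambda\le f(\Sg)\le f(\Sm)$, settling this case). The decisive structural point is that up to $S_k$ nothing has yet been abandoned, so at each state $S_j$ with $j\le k$ the available set is exactly $V\setminus S_j$ and the greedily chosen $u_{j+1}$ attains the global maximum cost-effectiveness $\rho_j:=f(u_{j+1}\mid S_j)/c(u_{j+1})$ over $V\setminus S_j$. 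Consequently the linear-program value obeys the clean bound $\Delta(b\mid S_j)\le\rho_j\,b$, with no contribution from high-ratio abandoned elements.

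Two relations then follow. First, feeding $\Lambda\le f(S_j)+\rho_j b$ into the telescoping argument of Lemma~\ref{lemma:approx-cost} — but now self-referentially, with $\Lambda$ playing the role that $f(T)$ plays there and $b$ the role of $c(T)$ — yields the $\Lambda$-analogue of Corollary~\ref{corollary:Sg-OPT-1}, namely $f(S_k)\ge(1-\e^{-c(S_k)/b})\Lambda$. Second, the best singleton satisfies $f(v^\ast)\ge f(u_{k+1}\mid S_k)=\rho_k c(u_{k+1})\ge(\Lambda-f(S_k))\,c(u_{k+1})/b$, while the rejection yields the budget relation $c(S_k)+c(u_{k+1})>b$. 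Normalising $\Lambda=1$ and setting $t=c(S_k)/b$, $\gamma=c(u_{k+1})/b$, $p=f(S_k)$, these combine into $f(\Sm)\ge\max\{p,(1-p)\gamma\}$ subject to $p\ge1-\e^{-t}$ and $t+\gamma>1$.

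It then remains to minimise $\max\{p,(1-p)\gamma\}$ over this feasible region. I expect the adversary to push $\gamma\to(1-t)^+$ and $p\to1-\e^{-t}$, reducing the problem to $\min_t\max\{1-\e^{-t},\e^{-t}(1-t)\}$; the two branches cross exactly when $\e^t=2-t$, i.e.\ at $t=\beta$, where the common value is $1-\e^{-\beta}=\alpha^\prime$, so the minimum equals $\alpha^\prime$ and $f(\Sm)\ge\alpha^\prime\Lambda$. The main obstacle I anticipate is not the final one-variable calculation but the structural bookkeeping in the preceding step: $\Delta(b\mid S_i)$ can be inflated by abandoned elements of small cost yet large cost-effectiveness, and a careless choice of index would break the clean inequality $\Delta(b\mid S_j)\le\rho_j b$. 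Pinning the analysis to the first-rejection index $k$ is what keeps the available set free of abandoned elements and lets the self-referential accumulation bound close the argument; checking that this index is either present or vacuous, and that the normalised programme is genuinely minimised at the crossover rather than at a boundary, are the places where I would be most careful.
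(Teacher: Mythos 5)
Your proposal is correct and follows essentially the same route as the paper: isolate the first-rejection prefix $S_k$, use the no-abandonment property to get $\Lambda\le f(S_i)+b\cdot f(u_{i+1}\mid S_i)/c(u_{i+1})$ for $i\le k$, telescope self-referentially to obtain $f(S_k)\ge(1-\e^{-c(S_k)/b})\cdot\Lambda$, bound the singleton via $f(v^\ast)\ge f(u_{k+1}\mid S_k)$, and combine with $c(S_k)+c(u_{k+1})>b$. The only cosmetic difference is the endgame: the paper collapses everything into the one-variable inequality $\frac{x}{1-x}-\ln(1-x)\ge 1$ with $x=f(\Sm)/\Lambda$ and invokes monotonicity of that function, whereas you minimize a three-variable program whose optimum sits where both constraints bind and the two branches cross, which is exactly the step you rightly flag as needing the most care.
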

\begin{proof}
	The first and third inequalities are straightforward, and we show that $\Lambda\leq {f(\Sm)}/{\alpha^\prime}$. 
%
%
	The inequality is trivial if $\Lambda=f(\Sm)$. Suppose $\Lambda>f(\Sm)$. Let $S_k = \{u_1, u_2, \dotsc, u_k\}$ be the element set constructed by the greedy heuristic when the first element $u_{k+1}$ from $V^\prime$ is considered but not added to $S_k$ due to budget violation. For any $i=0,1,\dotsc, k$ and any element $v\in V^\prime$, by the greedy rule, it holds that \[\frac{f(u_{i+1}\mid S_{i})}{c(u_{i+1})}\geq \frac{f(v\mid S_{i})}{c(v)}.\] 
	Thus, 
	\begin{equation}\label{eqn:largest_marginal_k_budgeted}
		f(S_{i})+b\cdot\frac{f(u_{i+1}\mid S_{i})}{c(u_{i+1})}\geq f(S_i)+\Delta(b\mid S_i)\geq \Lambda.
	\end{equation}
	Using an analogous argument to the proof of Lemma~\ref{lemma:approx-cost}, we can get that \[f(S_k)\geq (1-\e^{-c(S_k)/b})\cdot \Lambda.\] This implies that
	\begin{equation*}
	\frac{c(S_k)}{b}\leq -\ln\Big(1-\frac{f(S_k)}{\Lambda}\Big)\leq -\ln\Big(1-\frac{f(\Sm)}{\Lambda}\Big).
	\end{equation*}
	In addition, we can directly obtain from \eqref{eqn:largest_marginal_k_budgeted} that \[f(S_k)+b\cdot\frac{f(u_{k+1}\mid S_{k})}{c(u_{k+1})}\geq \Lambda.\] This implies that
	\begin{equation*}
		\frac{c(u_{k+1})}{b}\leq \frac{f(u_{k+1}\mid S_{k})}{\Lambda-f(S_k)}\leq \frac{f(\Sm)}{\Lambda-f(\Sm)}.
	\end{equation*}
	By the algorithm definition, we know that \[c(S_k)+c(u_{k+1})>b.\] Putting it together gives
	\begin{equation*}
		\frac{f(\Sm)}{\Lambda-f(\Sm)}-\ln\Big(1-\frac{f(\Sm)}{\Lambda}\Big)\geq 1.
	\end{equation*}
	Define \[g(x):=\frac{x}{1-x}-\ln(1-x)-1\] subject to $x\in [0,1)$. One can see that $g(x)$ increases along with $x$. Thus, the minimum $x$ satisfying $g(x^\ast)\geq 0$ is achieved at $g(x^\ast)=0$ such that \[(1-x^\ast)\cdot \big(\ln(1-x^\ast)+2\big)-1=0.\] Therefore, \[\frac{f(\Sm)}{\Lambda}\geq x^\ast=\alpha^\prime.\] This completes the proof.
\end{proof}



Theorem~\ref{theorem:approximability_budgeted} shows that the data-dependent ratio of $f(\Sm)$ to $\Lambda$ is guaranteed to be larger than $0.357$ for any problem instance, which is again tighter than the factor of $\frac{1 - 1/\e}{2} \approx 0.316$ given by \citet{Khuller_BMCP_1999} and matches that given by \citet{Wolsey_MGreedy_1982}. Our proof of Theorem~\ref{theorem:approximability_budgeted} is an extension of Wolsey's analysis~\cite{Wolsey_MGreedy_1982} that generalizes the result therein. (For the unit cost version, the factor can be improved to $(1-1/\e)$ as shown in Section~\ref{subsec:ub-cardinality}.) Next, we conduct experiments to show that the data-dependent ratio is usually much larger than $0.357$ or $0.405$ in practice, which demonstrates the tightness of our upper bound $\Lambda$. \figurename~\ref{fig:bounds} depicts the relationship among $f(\Sm)$, $\Lambda$, and $f(\OPT)$.

\begin{figure}[!t]
	\centering
	\includegraphics[width=0.6\linewidth]{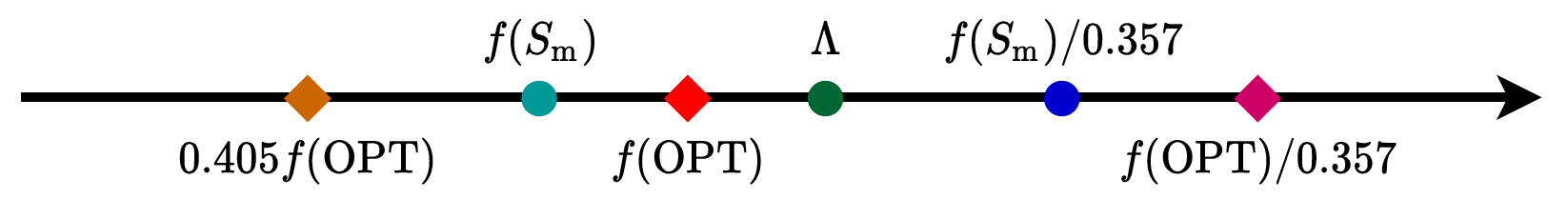}
	\vspace{-0.05in}
	\caption{Relationship among $f(\Sm)$, $\Lambda$, and $f(\OPT)$.}\label{fig:bounds}
	\vspace{-0.1in}
\end{figure}

\section{Experiments}\label{sec:experiments}
We carry out experiments on two applications to demonstrate the effectiveness of our upper bound. All the experiments are conducted on a Windows machine with an Intel Core 2.6GHz i7-7700 CPU and 32GB RAM.

\spara{Viral marketing in social networks}
%
Viral marketing in social networks \cite{Kempe_maxInfluence_2003,Tang_ASM_2019,Tang_IMhop_2018,Tang_OPIM_2018,Tang_infMax_2017,Han_AIM_2018,Huang_AIM_2020,Huang_ATPM_2020,Tang_profitMax_2018,Tang_profitMax_2016,Tang_profitMaxUS_2018} is one of the most important topics in data mining in recent years. In this application, we consider influence maximization~\cite{Kempe_maxInfluence_2003} on a social network $G=(V,E)$ with a set $V$ of vertices (representing users) and a set $E$ of edges (representing connections among users). The goal is to seed some users with incentives (e.g., discount, free samples, or monetary payment) to boost the revenue by leveraging the word-of-mouth effects on other users. We adopt the widely-used influence diffusion model called the independent cascade model \cite{Kempe_maxInfluence_2003}. Each edge $(u,v)$ is associated with a propagation probability $p_{u,v}$. Initially, the seed vertices $S$ are \textit{active}, while all the other vertices are \textit{inactive}. When a vertex $u$ first becomes active, it has a single chance to activate each inactive neighbor $v$ with success probability $p_{u,v}$. This process repeats until no more activation is possible. The \textit{influence spread} $f(S)$ of the seed set ${S}$ is the expected number of active vertices produced by the above process. \citet{Kempe_maxInfluence_2003} show that $f(S)$ is nondecreasing monotone submodular. We consider budgeted influence maximization that aims to find a vertex set $S$ maximizing $f(S)$ with the total cost $c(S)$ capped by a budget $b$, where each vertex $v$ is associated with a distinct cost $c(v)$ and $c(S)=\sum_{v\in S}c(v)$. 


Note that the influence diffusion is a random process. We use the advanced sampling technique in \cite{Ohsaka_prunedMC_2014} to estimate the influence spread in which $200$ random Monte-Carlo subgraphs are generated. We experiment with four real datasets from \cite{Kwak_twitter_2010,Leskovec_SNAP_2014} with millions of vertices, namely, 
Pokec ($1.6$M vertices and $30.6$M edges), Orkut ($3.1$M vertices and $117.2$M edges), LiveJournal ($4.8$M vertices and $69.0$M edges), and Twitter ($41.7$M vertices and $1.5$G edges).
As in \cite{Kempe_maxInfluence_2003}, we set $p_{u,v}$ of each edge $(u,v)$ to the reciprocal of $v$'s in-degree, and set $c(v)$ proportional to $v$'s out-degree to emulate that popular users require more incentives to participate. 

Due to massive data sizes, we cannot compute the true optima. To better visualize the tightness of different bounds on the optimum, we measure the ratios of the solution values obtained by \MG to the upper bounds, \eg~$f(\Sm)/\Lambda$, which represent the approximation guarantees achieved by \MG. We note that \citet{Leskovec_CELF_2007} developed an upper bound of $f(\Sg)+\Delta(b\mid \Sg)$ in our notations on the optimum. For comparisons, we evaluate both the ratios obtained for our upper bound $\Lambda$ and the upper bound developed by \citet{Leskovec_CELF_2007}. \figurename~\ref{fig:MSMK} shows the results. Note that a larger ratio represents a tighter upper bound. We observe that the ratio calculated by our upper bound is usually better than $0.9$, which is much larger than both the constant factor of $0.405$ and the ratio calculated by Leskovec~\etal's bound. This demonstrates that our upper bound $\Lambda$ is quite close to the optimum for the tested cases.


\begin{figure}[!t]
	\centering
		\vspace{-0.15in}
	\subfloat[Pokec]{\includegraphics[width=0.45\linewidth]{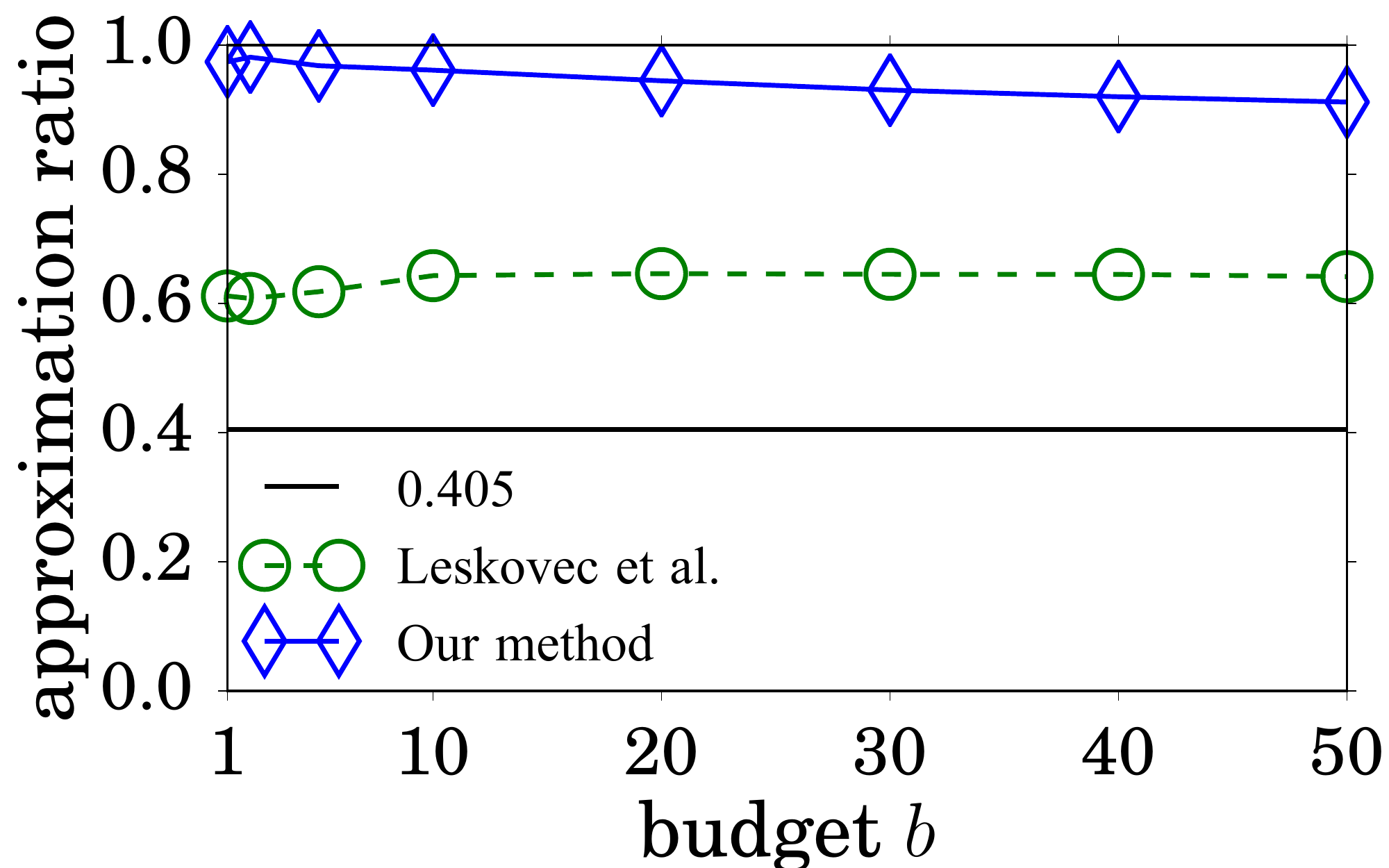}\label{subfig:soc-pokec_mgreedy}}\hfil
	\subfloat[Orkut]{\includegraphics[width=0.45\linewidth]{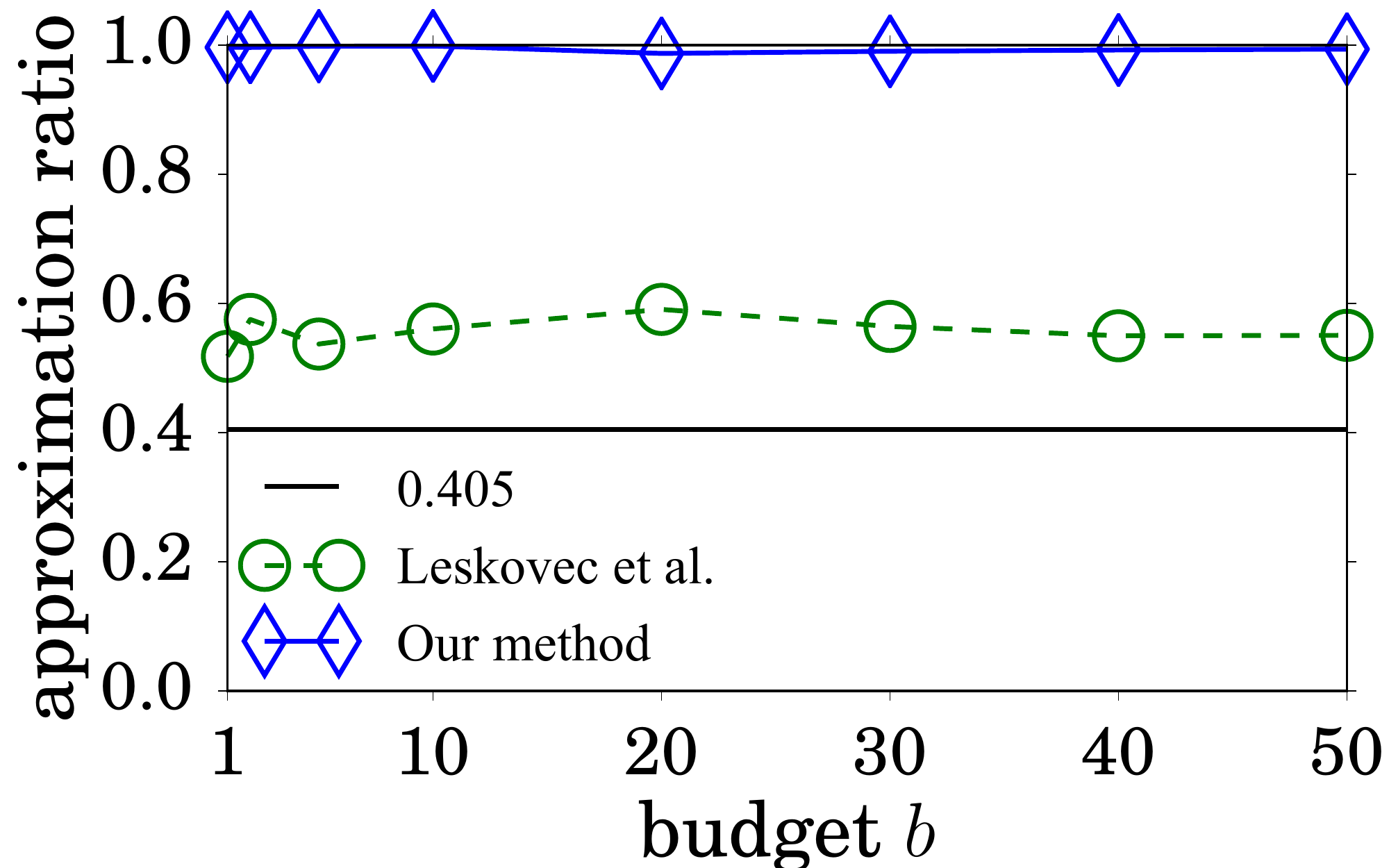}\label{subfig:orkut_mgreedy}}\hfil
	\subfloat[LiveJournal]{\includegraphics[width=0.45\linewidth]{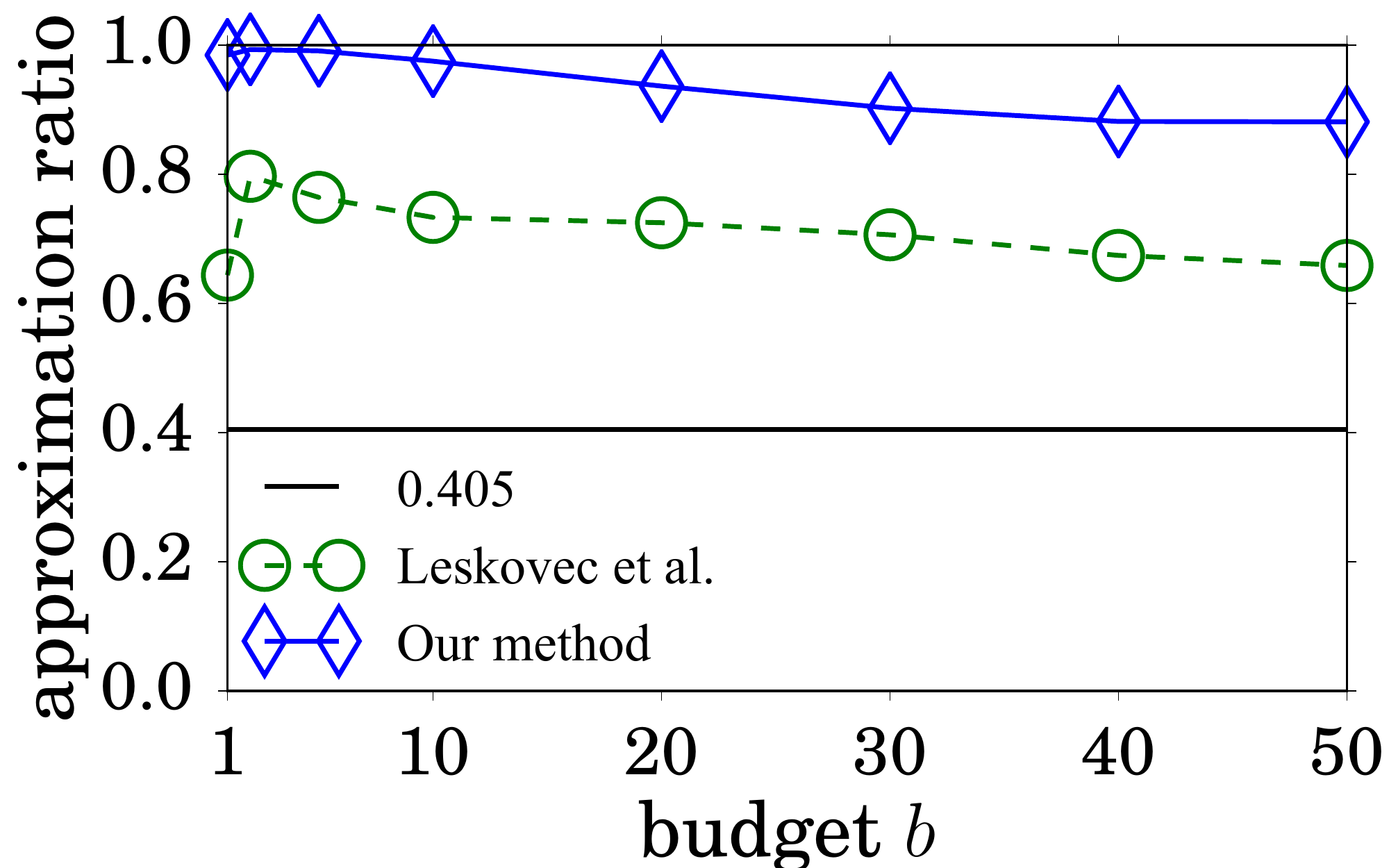}\label{subfig:liveJournal_mgreedy}}\hfil
	\subfloat[Twitter]{\includegraphics[width=0.45\linewidth]{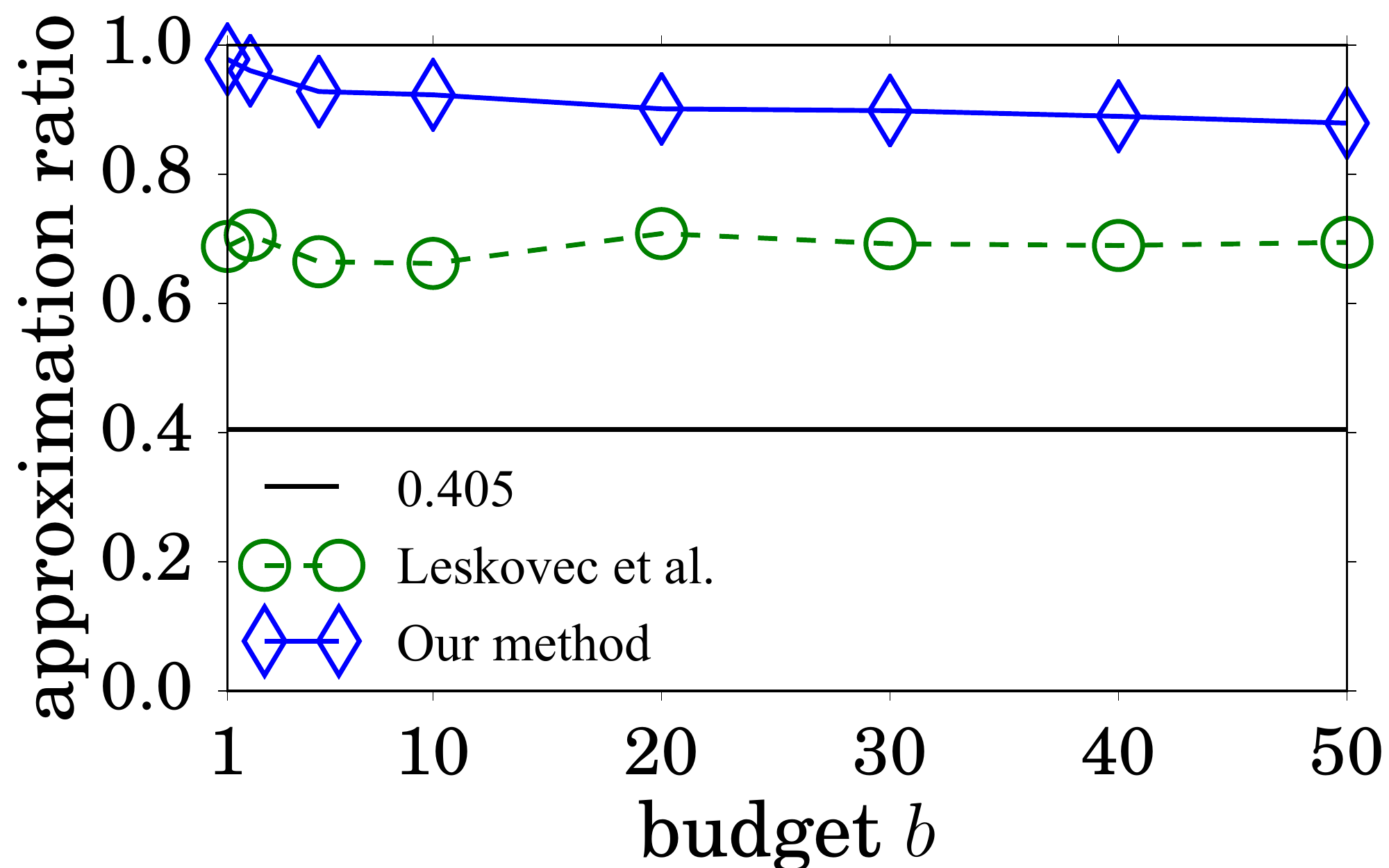}\label{subfig:big_twitter_mgreedy}}
		\vspace{-0.05in}
	\caption{Approximation ratio of modified greedy algorithm calculated by different upper bounds.}\label{fig:MSMK}
		\vspace{-0.15in}
\end{figure}




\spara{Branch-and-bound algorithm for budgeted maximum coverage} Tight bounds are valuable to advancing algorithmic efficiency. Consider the information retrieval problem where one is given a bipartite graph constructed between a set $V$ of objects (\eg~documents, images etc.) and a bag of words $W$. There is an edge $e_{v,w}$ if the object $v\in V$ contains the word $w\in W$. A natural choice of the function $f$ has the form $|\Gamma(X)|$, where $\Gamma(X)$ is the neighborhood function that maps a subset of objects $X\subseteq V$ to the set of words $\Gamma(X)\subseteq W$ presented in the objects. Meanwhile, selecting an object $v\in X$ will incur a cost $c(v)$. Intuitively, one may want to maximize the diversity (\ie~the number of words) by selecting a set of objects subject to a cost budget $b$. This problem can be seen as budgeted maximum coverage.  As a proof-of-concept, we use synthetic data that define $|V|=100$ and $|W|=100$ and randomly generate an edge between $v\in V$ and $w\in W$ with probability $p=0.02$. 
We report the average results of $10$ instances. 

We compare the branch-and-bound algorithm using our upper bound $\Lambda$ (called ``Our method'') against the data-correcting algorithm (called DCA) \cite{Goldengorin_dca_1999} which is a branch-and-bound algorithm for maximizing a submodular function.
In particular, in each branch of a search lattice $[A,B]$, the branch-and-bound algorithm needs to find an upper bound on the value of any candidate solution $S$ satisfying $A\subseteq S\subseteq B$ and $c(S)\leq b$. To achieve this goal, we first compute an upper bound $\Lambda^\prime$ on the optimum of $\max_{T\subseteq B\setminus A}\{ f(T\mid A)\colon c(T)\leq b-c(A)\}$ as $f(T\mid A)$ is also a monotone submodular function. Then, $f(A)+\Lambda^\prime$ is an upper bound on the optimum of branch $[A,B]$.
On the other hand, DCA uses $f(A)+\Delta(b-c(A)\mid A)$ as the upper bound, which is always (much) looser than ours.
DCA considers homogeneous costs only and thus we set $c(v)=1$ for each object $v$. We manually terminate the algorithm if it cannot finish within $2$ hours.
Table~\ref{table:results_MSMC} shows the running time of DCA and our algorithm when the cost budget $b$ increases from $1$ to $10$. As can be seen, our algorithm can find the optimal solution within $2$ seconds for all the cases tested whereas DCA runs 1--4 orders of magnitude slower than our algorithm when $b\leq 5$ and even fails to find the solution within $2$ hours when $b\geq 6$. 

\begin{table}[!t]
	\centering
	\caption{Running time (seconds). The field with ``\textendash'' means that the method cannot finish.}
		\vspace{-0.02in}
	\label{table:results_MSMC}
	\begin{tabular}{lrrrrrrrrrr}
		\toprule
		Budget $b$ & 1 & 2 & 3 &4 &5 &6 &7 &8 &9 &10  \\ 
		\midrule
		DCA & 0.43 & 6.06 & 99.92 & 899.66 & 6807.67 & \textendash & \textendash & \textendash & \textendash & \textendash \\ 
		\textbf{Our method} & 0.04 & 0.06 & 0.09 & 0.11 & 0.17 & 0.32 & 0.51 & 0.77 & 1.21 & 1.90 \\ 
		\bottomrule
	\end{tabular}
		\vspace{-0.05in}
\end{table}
	\section{Further Discussions}

\subsection{Upper Bound on the Approximation of \MG}
We provide an instance of the problem for which the \MG algorithm achieves a ratio of $(1/2+\varepsilon)$ for any given $\varepsilon>0$. Specifically, we consider a modular set function $f$ such that $f(S)=\sum_{v\in S} f(v)$. Suppose that there are three elements $u, v, w$ with $f(u)=f(v) = 1$, $f(w) = 1 + 2\varepsilon$, $c(u) = c(v) = 1$, and $c(w) =1+\varepsilon$. When $b=2$, \MG will select $\{w\}$ with $f(\Sm)=1+2\varepsilon$, since $\Sg=\{v^\ast\}=\{w\}$ according to the algorithm. It is easy to verify that the optimal solution is $\{u,v\}$ with $f(\OPT)=2$. As a result, \MG provides $(1/2+\varepsilon)$-approximation. We also note that \citet{Khuller_BMCP_1999} claimed that they have constructed an instance for which \MG can only achieve a ratio of approximately $0.44$, but unfortunately they did not provide the detailed instance. It is unclear whether the approximation factor of $0.405$ for \MG is completely tight, but the above analysis shows that the gap between our ratio and the actual worst-case ratio is small.

\subsection{Analysis of \texorpdfstring{$\bm{(1-1/\sqrt{\e})}$}{1-1/\unichar{"221A}e} Approximation Guarantee}\label{subsec:correct-analysis}
\citet{Khuller_BMCP_1999} claimed that the modified greedy algorithm, referred to as \MG, achieves an approximation guarantee of $(1-1/\sqrt{\e})$, but unfortunately their proof was flawed as pointed out by \citet{Zhang_billboard_2018}. 
We provide here a brief explanation of the problem in the proof of \cite[Theorem~3]{Khuller_BMCP_1999}. When showing that $f(\Sg)\geq (1-1/\sqrt{\e})\cdot f(\OPT)$ when $c(\Sg)\geq b/2$, where $\Sg$ is the set obtained by the greedy heuristic and $\OPT$ is the optimal solution, the proof relies on $c(S_\ell)\geq b/2$, where $S_\ell\subseteq \Sg$ is an intermediate set constructed by the greedy heuristic when the first element from $\OPT$ is selected for consideration but not added to $\Sg$ due to budget violation. However, there is a gap here, since $c(\Sg)\geq b/2$ does not imply $c(S_\ell)\geq b/2$. Interested readers are referred to the detailed analysis by \citet{Zhang_billboard_2018}.

In this section, we provide a correct proof for the factor of $(1-1/\sqrt{\e})$. We again utilize our general approach for analyzing approximation guarantees of algorithms by solving an optimizing problem that characterizes the relations between the solution value and the optimum. The key difference for deriving the two factors (\ie~$0.405$ in Section~\ref{sec:proof} and $(1-1/\sqrt{\e})$ in this section) lies in the relations between $f(\Sm)$ and $f(\OPT)$ used in the analysis. In particular, in Section~\ref{sec:proof}, we make use of two relatively complicated relations given by Corollary~\ref{corollary:Sg-OPT-2} and Corollary~\ref{corollary:v-OPT}, whereas in this section, we will replace them with a new and simple relation given by the following Corollary~\ref{corollary:Sg-OPT-3} which will remarkably simplify the analysis.  
\begin{theorem}\label{thm:approx-ratio-mgreedy-append}
	The \MG algorithm achieves an approximation factor of $(1-1/\sqrt{\e})$.
\end{theorem}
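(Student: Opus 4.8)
The plan is to reuse verbatim the three-step template behind Theorem~\ref{thm:approx-ratio}: (i) collect a few relations tying $f(\Sm)$ to $f(\OPT)$, (ii) package them into a small optimization problem whose optimum lower-bounds $f(\Sm)/f(\OPT)$, and (iii) solve that program. The only substantive change is that the two technical inequalities of Corollary~\ref{corollary:Sg-OPT-2} and Corollary~\ref{corollary:v-OPT} get replaced by the single, simpler Corollary~\ref{corollary:Sg-OPT-3}, so the resulting program carries fewer variables and is easy to minimize by hand.

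For step (i) I would keep Corollary~\ref{corollary:Sg-OPT-1}, $f(\Sg)\ge(1-\e^{-c(Q)/b})f(\OPT)$, which already settles the regime of large $c(Q)$: if $c(Q)\ge b/2$ then $f(\Sm)\ge f(\Sg)\ge(1-\e^{-1/2})f(\OPT)=(1-1/\sqrt\e)f(\OPT)$. All the work is thus concentrated in the complementary regime $c(Q)<b/2$, where $c(o)>b-c(Q)>b/2$, so that the single best element $v^\ast$ must carry the guarantee. Here I would prove Corollary~\ref{corollary:Sg-OPT-3} as a lower bound on $f(v^\ast)$ in terms of $f(\OPT)$, $f(Q)$ and the normalized costs, using the continuous extension $F$ of Lemma~\ref{lemma:approx-cost-continuous}: I would bound $F$ evaluated at the cost of the surviving OPT elements from below by $(1-1/\e)$ times their $f$-value, bound it from above by the greedy-ratio expression anchored at $Q$, and finally invoke $f(\OPT)\le f(\OPT^\diamond)+(\text{marginals of the rejected OPT elements, each }\le f(v^\ast))$.

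For step (ii) the variables are $\alpha=f(\Sm)/f(\OPT)$, $x_1=f(Q)/f(\OPT)$, $x_2=f(o\mid Q)/f(\OPT)$, $x_4=c(Q)/b$ and $x_5=c(o)/b$, with constraints $\alpha\ge x_1$, $\alpha\ge x_2$, the Corollary~\ref{corollary:Sg-OPT-1} bound $x_1\ge1-\e^{-x_4}$, the new Corollary~\ref{corollary:Sg-OPT-3}, the submodularity inequality $x_1+x_2/x_5\ge1$, and the budget inequality $x_4+x_5\ge1$. For step (iii) I would argue that the minimizer sits exactly at $x_4=1/2$: the Corollary~\ref{corollary:Sg-OPT-1} curve $1-\e^{-x_4}$ increases in $x_4$ while the Corollary~\ref{corollary:Sg-OPT-3} bound on $f(v^\ast)$ decreases in $x_4$, the two cross precisely at $1-\e^{-1/2}=1-1/\sqrt\e$, and a short monotonicity check certifies that no feasible point beats the crossing value.

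The hard part will be step (i), namely making Corollary~\ref{corollary:Sg-OPT-3} sharp enough. A careless version—fractionally including only the first rejected element $o$ and writing $f(v^\ast)\ge(1-1/\e)f(\OPT)-f(Q)$—loses a factor and drives the program down to roughly $0.316$ (or $\approx0.387$ with the single-element refinement), because the adversary may take $c(o)\to b-c(Q)$, which collapses the overflow factor $\tfrac{c(Q)+c(o)-b}{c(o)}$. The delicate point is to bring in the \emph{second} rejected OPT element $o^\prime$ to prevent that term from degenerating, so that the lower bound on $f(v^\ast)$ is strong enough for the two crossing curves to meet at $1-1/\sqrt\e$ rather than below it. Once Corollary~\ref{corollary:Sg-OPT-3} is in place, steps (ii)--(iii) reduce to the same routine calculus as in Lemmas~\ref{lemma:ratio-alpha} and \ref{lemma:alpha-bound}.
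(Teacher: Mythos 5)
Your template (relations $\to$ small program $\to$ solve) and your case split at $c(Q)=b/2$ match the paper, but the content of the second case has a genuine gap, and it sits exactly where you flag ``the hard part.'' You have misread Corollary~\ref{corollary:Sg-OPT-3}: in the paper it is a lower bound on $f(\Sg)$, not on $f(v^\ast)$. It is derived from a new elementary bound (Lemma~\ref{lemma:lower-bound-append}: $f(\Sg)\geq(1-c(T)/b)\,f(T)$, proved by a linear, not exponential, telescoping at the stopping set $S_\ell$) applied to the marginal function $f(\cdot\mid Q)$, and it involves only the first rejected element $o$ (via $c(Q)+c(o)>b$, hence $c(\OPT^\prime)<c(Q)$); neither $o^\prime$ nor the continuous extension $F$ appears anywhere in the paper's proof of this theorem. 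What you describe instead is the machinery of Corollary~\ref{corollary:v-OPT}, and the program you actually assemble from it (Corollary~\ref{corollary:Sg-OPT-1}, the $f(v^\ast)$ bound, $x_1+x_2/x_5\geq1$, $x_4+x_5\geq1$) does not have minimum $1-1/\sqrt{\e}$. The adversary takes $c(o^\prime)/b\to0$ together with $c(Q)+c(o)\to b^{+}$, which collapses the overflow factor $\tfrac{c(Q)+c(o)+c(o^\prime)-b}{c(o)}$ to $0$ \emph{even with $o^\prime$ included}, so bringing in the second rejected element does not rescue the bound as you hope. The $f(v^\ast)$ constraint then degenerates to $x_1\geq(1-1/\e)(1-2\alpha)$, and the surviving constraints only force $\alpha\geq(1-\alpha)(1+\ln(1-\alpha))$, i.e.\ Wolsey's $\alpha^\prime\approx0.357$. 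Concretely, $\alpha=x_1=0.36$, $x_4=0.446$, $x_5=0.554$, $x_2=0.355$, $x_6=10^{-3}$ satisfies every constraint you list, so your program certifies at most $0.36<1-1/\sqrt{\e}\approx0.393$. (In the paper's $0.405$ proof this degeneration is blocked by the extra constraint from Corollary~\ref{corollary:Sg-OPT-2}, which you explicitly discard.)

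The missing idea is that in the regime $c(Q)<b/2$ the guarantee is carried not by $v^\ast$ but by the continued growth of the greedy set \emph{after} $Q$: applying Lemma~\ref{lemma:lower-bound-append} to $f(\cdot\mid Q)$ gives $f(\Sg)\geq f(Q)+\bigl(1-\tfrac{c(Q)}{b-c(Q)}\bigr)f(\OPT^\prime\mid Q)$, whose coefficient is strictly positive precisely when $c(Q)<b/2$. Writing $x_1=f(Q)/f(\OPT)$, $x_2=f(\OPT^\prime\mid Q)/f(\OPT)$, $x_3=c(Q)/b$, the paper combines this with $\alpha\geq1-x_1-x_2$ (from $f(Q)+f(o\mid Q)+f(\OPT^\prime\mid Q)\geq f(\OPT)$ and $f(v^\ast)\geq f(o\mid Q)$) and $x_1\geq1-\e^{-x_3}$; the linear combination with weights $(1-2x_3)$, $(1-x_3)$, $x_3$ yields $(2-3x_3)\alpha\geq1-x_3-x_3\e^{-x_3}$, whose minimum over $x_3\in[0,1/2]$ is attained at $x_3=1/2$ and equals exactly $1-1/\sqrt{\e}$. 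Without Corollary~\ref{corollary:Sg-OPT-3} in this form, your two curves do not cross at $1-\e^{-1/2}$, and the proof does not go through.
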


To prove Theorem~\ref{thm:approx-ratio-mgreedy-append}, we start with the following useful lemma.
\begin{lemma}\label{lemma:lower-bound-append}
	Given any element set $T$, the greedy heuristic returns $\Sg$ subject to a budget constraint $b$ satisfying
	\begin{equation*}
	f(\Sg)\geq \Big(1-\frac{c(T)}{b}\Big)\cdot f(T).
	\end{equation*}
\end{lemma}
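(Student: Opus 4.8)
The plan is to locate the single greedy step that controls the entire bound and to exploit two structural facts about the greedy densities. First I would dispose of the trivial regimes: if $c(T)\ge b$ the claimed factor $1-c(T)/b$ is nonpositive and the inequality holds because $f$ is non-negative; and if every element of $T$ is eventually placed in $\Sg$, then $f(\Sg)\ge f(T)\ge(1-c(T)/b)f(T)$ by monotonicity and we are done. So assume $c(T)<b$ and let $t$ be the \emph{first} element of $T$ that the greedy heuristic considers but abandons due to budget violation, and let $S_j$ be the greedy set present at that moment. Write $\rho^\ast:=f(t\mid S_j)/c(t)$ for the density at which $t$ is rejected.

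The core of the argument is the single inequality $f(T)\le f(S_j)+\rho^\ast\,c(T)$. To obtain it, expand $f(T)$ against the base $S_j$ via the monotone-submodular bound recalled at the start of the section, $f(T)\le f(S_j)+\sum_{v\in T\setminus S_j}f(v\mid S_j)$, and then observe that because $t$ is the \emph{first} element of $T$ to be abandoned, every element of $T\setminus S_j$ is still in the search space $V'$ when $t$ is examined. Since $t$ is the arg-max density at that instant, each such $v$ obeys $f(v\mid S_j)/c(v)\le\rho^\ast$, whence $\sum_{v\in T\setminus S_j}f(v\mid S_j)\le\rho^\ast\sum_{v\in T\setminus S_j}c(v)\le\rho^\ast c(T)$.

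The second ingredient is that every element $u_i$ added to $\Sg$ before $t$ is examined has density at least $\rho^\ast$: such a $u_i$ was the arg-max at an earlier, smaller base $S_{i-1}$ where $t$ was also available, and submodularity only lowers $t$'s density as the base grows from $S_{i-1}$ to $S_j$. This yields $f(S_j)=\sum_{i\le j}f(u_i\mid S_{i-1})\ge\rho^\ast c(S_j)$, i.e. $\rho^\ast\le f(S_j)/c(S_j)$. Substituting into the core inequality gives $f(T)\le f(S_j)\,(c(S_j)+c(T))/c(S_j)$, hence $f(S_j)\ge\frac{c(S_j)}{c(S_j)+c(T)}f(T)$. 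Finally I would invoke the budget violation $c(S_j)+c(t)>b$ together with $c(t)\le c(T)$ (valid since $t\in T$) to conclude $c(S_j)+c(T)>b$, so that $\frac{c(S_j)}{c(S_j)+c(T)}=1-\frac{c(T)}{c(S_j)+c(T)}\ge1-\frac{c(T)}{b}$; combined with $f(\Sg)\ge f(S_j)$ this is exactly the claim.

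The main obstacle is bookkeeping the greedy order rather than any hard inequality. I must argue carefully that every element of $T\setminus S_j$ is genuinely still in $V'$ when $t$ is examined, which is precisely where the choice of $t$ as the \emph{first} abandoned element of $T$ is essential, and that each previously added element has density at least $\rho^\ast$; both require revisiting how $V'$ and $\Sg$ evolve across the iterations of Algorithm~\ref{alg:MGreedy}. The one non-obvious quantitative step is the observation $c(t)\le c(T)$, which is what converts the budget-violation inequality $c(S_j)+c(t)>b$ into a bound expressed through $c(T)$ and $b$; everything else is routine algebra.
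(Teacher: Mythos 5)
Your proof is correct and follows essentially the same route as the paper's: both identify the greedy prefix $S_j$ (the paper's $S_\ell$) present when the first element of $T$ is rejected for budget reasons, bound $f(T)\le f(S_j)+\rho^\ast c(T)$ via the greedy density comparison, lower-bound $f(S_j)$ by $\rho^\ast c(S_j)$, and close with the budget-violation inequality $c(S_j)+c(T)>b$. The only cosmetic difference is that you pivot on the rejected element's density $\rho^\ast$ and the cost $c(T)$ directly, while the paper works with $f(T'\mid S_\ell)/c(T')$ for $T'=T\setminus S_\ell$ and relaxes $c(T')$ to $c(T)$ at the end; the algebra is equivalent.
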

\begin{proof}
	The lemma is trivial when $T\subseteq \Sg$. Suppose $T\setminus \Sg\neq \emptyset$. Let $S_\ell= \{u_1,u_2,\dotsc,u_\ell\}$ be the element set constructed by the greedy heuristic when the first element from $T$ is considered but not added to $S_\ell$ due to budget violation. Due to submodularity and the greedy rule, we have
	\begin{equation*}
	\frac{f(u_{1}\mid S_0)}{c(u_1)}\geq \frac{f(u_{2}\mid S_1)}{c(u_2)}\geq\dotsb \geq \frac{f(u_{\ell}\mid S_{\ell-1})}{c(u_{\ell})}\geq \max_{v\in T^\prime}\frac{f(v\mid S_{\ell})}{c(v)}\geq\frac{f(T^\prime\mid S_{\ell})}{c(T^\prime)},
	\end{equation*}
	where $T^\prime=T\setminus S_\ell$. Observe that \[f(T)\leq f(S_{\ell})+f(T^\prime\mid S_{\ell}).\] Meanwhile, 
	\begin{equation*}
	f(S_{\ell})=\sum_{i=1}^{\ell} f(u_{i}\mid S_{i-1})\geq \sum_{i=1}^{\ell} \Big(c(u_i)\cdot\frac{f(T^\prime\mid S_{\ell})}{c(T^\prime)}\Big)=c(S_{\ell})\cdot\frac{f(T^\prime\mid S_{\ell})}{c(T^\prime)}.
	\end{equation*}
	Note that by the algorithm definition, \[c(S_{\ell})+c(T^\prime)> b.\] Therefore,
	\begin{equation*}
	f(T)< \Big(1+\frac{c(T^\prime)}{b-c(T^\prime)}\Big)\cdot f(S_{\ell})= \frac{b}{b-c(T^\prime)}\cdot f(S_{\ell}) \leq \frac{b}{b-c(T)}\cdot f(S_{\ell}).
	\end{equation*}
	Rearranging it concludes the proof.
\end{proof}

Based on Lemma~\ref{lemma:lower-bound-append}, a relation between the greedy solution $\Sg$ and the optimal solution $\OPT$ can be derived as follows.
\begin{corollary}\label{corollary:Sg-OPT-3}
	Let $\OPT^\prime:=\OPT\setminus (Q\cup\{o\})$. Then,
	\begin{equation}
	f(\Sg)\geq f(Q)+\Big(1-\frac{c(Q)}{b-c(Q)}\Big)\cdot f(\OPT^\prime\mid Q).
	\end{equation}
\end{corollary}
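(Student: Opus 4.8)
The plan is to mirror the proof of Corollary~\ref{corollary:Sg-OPT-2}, but to replace the exponential estimate of Lemma~\ref{lemma:approx-cost} with the linear estimate of Lemma~\ref{lemma:lower-bound-append}, applied to the \emph{marginal} function rather than to $f$ itself. First I would invoke Lemma~\ref{lemma:marginal-submodular} to record that $g(\cdot):=f(\cdot\mid Q)$ is a monotone nondecreasing submodular set function. The key observation is that, once the greedy heuristic has constructed $Q$, each subsequent step selects the remaining element maximizing $\frac{f(v\mid \cdot)}{c(v)}=\frac{g(v\mid \cdot)}{c(v)}$ subject to the residual budget $b-c(Q)$; hence the tail of the greedy run is exactly the greedy heuristic applied to $g$ with budget $b-c(Q)$, and its output set is $\Sg\setminus Q$, with $g(\Sg\setminus Q)=f(\Sg)-f(Q)$ because $Q\subseteq \Sg$.

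Applying Lemma~\ref{lemma:lower-bound-append} to $g$ with budget $b-c(Q)$ and target set $T=\OPT^\prime$ then yields
\begin{equation*}
	f(\Sg)-f(Q)=g(\Sg\setminus Q)\geq \Big(1-\tfrac{c(\OPT^\prime)}{b-c(Q)}\Big)\cdot f(\OPT^\prime\mid Q).
\end{equation*}
To turn this into the claimed bound I would compare $c(\OPT^\prime)$ with $c(Q)$. Since $o\in\OPT$ and $\OPT^\prime\subseteq\OPT\setminus\{o\}$, we have $c(\OPT^\prime)+c(o)\leq c(\OPT)\leq b$, so $c(\OPT^\prime)\leq b-c(o)$; combined with the budget-violation condition $c(Q)+c(o)>b$ that defines $o$, this gives $c(\OPT^\prime)\leq b-c(o)<c(Q)$. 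Because $f(\OPT^\prime\mid Q)\geq 0$ by monotonicity and $1-\frac{c(\OPT^\prime)}{b-c(Q)}\geq 1-\frac{c(Q)}{b-c(Q)}$ (the inequality on the factor follows from $c(\OPT^\prime)\leq c(Q)$ and holds regardless of the sign of either factor), replacing $c(\OPT^\prime)$ by the larger quantity $c(Q)$ only weakens the right-hand side, so I would conclude $f(\Sg)\geq f(Q)+\big(1-\frac{c(Q)}{b-c(Q)}\big)\cdot f(\OPT^\prime\mid Q)$.

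I expect the main obstacle to be the justification of the first step: that the continuation of the greedy heuristic after $Q$ genuinely coincides with the greedy heuristic on $g$ under the residual budget, so that Lemma~\ref{lemma:lower-bound-append} may be applied verbatim with $g$ in place of $f$. This requires checking that the cost-effectiveness ranking is unchanged when passing to the marginal, that the budget accounting $c(\cdot)+c(Q)\leq b$ matches the residual budget $b-c(Q)$, and that no element of $\OPT^\prime$ is removed from consideration before this point (which holds because $o$, and a fortiori any element of $\OPT^\prime$, lies among the first $\OPT$-elements the heuristic rejects). A minor edge case to dispatch separately is $c(Q)=b$, where $b-c(Q)=0$ makes the stated factor degenerate; there $f(\Sg)\geq f(Q)$ settles the inequality directly.
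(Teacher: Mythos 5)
Your proposal is correct and follows essentially the same route as the paper's proof: apply Lemma~\ref{lemma:marginal-submodular} to pass to the marginal function $f(\cdot\mid Q)$, invoke Lemma~\ref{lemma:lower-bound-append} on the greedy continuation with residual budget $b-c(Q)$ and $T=\OPT^\prime$, and then replace $c(\OPT^\prime)$ by $c(Q)$ using $c(Q)+c(o)>b$ together with $c(o)+c(\OPT^\prime)\leq c(\OPT)\leq b$. You are in fact somewhat more explicit than the paper about why the tail of the greedy run is a valid instance of Lemma~\ref{lemma:lower-bound-append} for the marginal function and about the degenerate case $c(Q)=b$, but the substance is identical.
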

\begin{proof}
	According to Lemma~\ref{lemma:marginal-submodular}, we know that $f(S\mid Q)$ is a monotone nondecreasing submodular set function with respect to $S$. Then, by Lemma~\ref{lemma:lower-bound-append}, we have
	\begin{equation*}
	f(\Sg)=f(Q)+f((\Sg\setminus Q)\mid Q)\geq f(Q)+\Big(1-\frac{c(\OPT^\prime)}{b-c(Q)}\Big)\cdot f(\OPT^\prime\mid Q).
	\end{equation*}
	Furthermore, by the algorithm definition, we know that 
	\begin{align*}
		&c(Q)+c(o)>b,\\
		\text{and}\quad&c(o)+c(\OPT^\prime)\leq c(\OPT)\leq b.
	\end{align*}
	Thus, 
	\begin{equation*}
		c(\OPT^\prime)<c(Q).
	\end{equation*}
	Putting it together completes the proof.
\end{proof}

Now, we can derive a lower bound on the worst-case approximation of \MG by solving the following optimization problem.
\begin{lemma}\label{lemma:ratio-alpha-append}
	It holds that $f(\Sm)\geq \bar{\alpha}^\ast \cdot f(\OPT)$, where $\bar{\alpha}^\ast $ is the minimum of the following optimization problem with respect to $\alpha,x_1,x_2,x_3$.
	\begin{align}
	\min\quad &\alpha\label{obj-append}\\
	\text{s.t.\quad} 
	& \alpha \geq x_1,\label{cons:lower1}\\
	& \alpha \geq 1-x_1-x_2,\label{cons:lower2}\\
	& \alpha \geq x_1+\Big(1-\frac{x_3}{1-x_3}\Big)\cdot x_2,\label{cons:lower3}\\
	& x_1 \geq 1-\e^{-x_3},\label{cons:fs}\\
	& \alpha,x_1,x_2,x_3\in [0,1].\label{cons:domain}
	\end{align}
\end{lemma}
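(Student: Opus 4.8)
The plan is to mirror the argument sketched for Lemma~\ref{lemma:ratio-alpha}: I will exhibit a single feasible point of the minimization problem \eqref{obj-append}--\eqref{cons:domain} whose objective value equals the actual ratio $f(\Sm)/f(\OPT)$ achieved by \MG. Since $\bar\alpha^\ast$ is by definition the minimum objective over all feasible points, feasibility of this point forces $f(\Sm)/f(\OPT)\geq\bar\alpha^\ast$, which is exactly the claim. Concretely, I would set
\[
\alpha=\frac{f(\Sm)}{f(\OPT)},\quad x_1=\frac{f(Q)}{f(\OPT)},\quad x_2=\frac{f(\OPT^\prime\mid Q)}{f(\OPT)},\quad x_3=\frac{c(Q)}{b},
\]
and then verify each of the five constraint families in turn. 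The setup is identical in spirit to Lemma~\ref{lemma:ratio-alpha}, but simpler, since the single relation of Corollary~\ref{corollary:Sg-OPT-3} now plays the role of the two more intricate corollaries used there.

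Most constraints fall out directly from results already established. Constraint \eqref{cons:lower1}, $\alpha\geq x_1$, is just $f(\Sm)\geq f(\Sg)\geq f(Q)$, using $Q\subseteq\Sg$, monotonicity, and the fact that \MG returns the better of $\Sg$ and $\{v^\ast\}$. Constraint \eqref{cons:lower3} is Corollary~\ref{corollary:Sg-OPT-3} divided through by $f(\OPT)$, after rewriting $c(Q)/(b-c(Q))=x_3/(1-x_3)$ and using $f(\Sm)\geq f(\Sg)$. Constraint \eqref{cons:fs}, $x_1\geq 1-\e^{-x_3}$, is Corollary~\ref{corollary:Sg-OPT-1}, whose proof in fact delivers $f(Q)\geq(1-\e^{-c(Q)/b})f(\OPT)$ and not merely the bound on $f(\Sg)$, so it applies to $x_1$ as stated. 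The domain constraints \eqref{cons:domain} follow from non-negativity of $f$ together with $f(Q)\leq f(\OPT)$, $f(\OPT^\prime\mid Q)\leq f(\OPT^\prime)\leq f(\OPT)$, $f(\Sm)\leq f(\OPT)$, and $c(Q)\leq b$, all of which hold because $Q$, $\OPT^\prime$, and $\Sm$ are feasible and $f$ is monotone submodular.

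The one constraint carrying the real content is \eqref{cons:lower2}, $\alpha\geq 1-x_1-x_2$, i.e.\ $f(\Sm)\geq f(\OPT)-f(Q)-f(\OPT^\prime\mid Q)$. Here I would first apply the submodular partition bound from the preamble with $S=Q$ and the partition $\{\{o\},\OPT^\prime\}$ of $(\OPT\cup Q)\setminus Q=\{o\}\cup\OPT^\prime$ (noting $o\notin Q$ since $o$ was rejected at the stage defining $Q$), together with monotonicity, to obtain
\[
f(\OPT)\leq f(\OPT\cup Q)\leq f(Q)+f(o\mid Q)+f(\OPT^\prime\mid Q).
\]
The crucial move is then to bound the isolated single-element term by $\Sm$: submodularity gives $f(o\mid Q)\leq f(\{o\})\leq f(v^\ast)$, and since \MG outputs the larger of $\Sg$ and $\{v^\ast\}$ we have $f(v^\ast)\leq f(\Sm)$. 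Substituting $f(o\mid Q)\leq f(\Sm)$ and rearranging yields \eqref{cons:lower2}. This is the step I expect to be the main obstacle, because it is the only place where the ``$v^\ast$'' modification of the greedy rule is actually invoked; without it the term $f(o\mid Q)$ cannot be controlled and the whole bound collapses, exactly as the unbounded-ratio example in Section~2 illustrates. Once \eqref{cons:lower2} is verified, feasibility of the chosen point is complete and the lemma follows immediately.
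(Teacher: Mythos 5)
Your proposal is correct and follows essentially the same route as the paper: the same substitution $\alpha=f(\Sm)/f(\OPT)$, $x_1=f(Q)/f(\OPT)$, $x_2=f(\OPT^\prime\mid Q)/f(\OPT)$, $x_3=c(Q)/b$, with constraints \eqref{cons:lower1}, \eqref{cons:lower3}, \eqref{cons:fs}, and \eqref{cons:domain} verified from the same corollaries, and constraint \eqref{cons:lower2} obtained exactly as in the paper via $f(\OPT)\leq f(Q)+f(o\mid Q)+f(\OPT^\prime\mid Q)$ together with $f(o\mid Q)\leq f(v^\ast)\leq f(\Sm)$. Your write-up merely spells out that last step in more detail than the paper does.
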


Analogous to the proof of Lemma~\ref{lemma:ratio-alpha}, we consider $\alpha=\frac{f(\Sm)}{f(\OPT)}$, $x_1=\frac{f(Q)}{f(\OPT)}$, $x_2=\frac{ f(\OPT^\prime\mid Q)}{f(\OPT)}$, $x_3=\frac{c(Q)}{b}$ here with some simpler relations between the solution value and the optimum, \eg~\eqref{cons:lower2} and \eqref{cons:lower3} are used instead of \eqref{cons:lower-fs-plus} and \eqref{cons:fs-lower-1}.

\begin{lemma}\label{lemma:alpha-bound-append}
	$\bar{\alpha}^\ast\geq 1-1/\sqrt{\e}$. 
\end{lemma}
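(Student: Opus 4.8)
The plan is to show directly that every point $(\alpha,x_1,x_2,x_3)$ feasible for \eqref{cons:lower1}--\eqref{cons:domain} satisfies $\alpha\geq 1-1/\sqrt{\e}$; since $\bar{\alpha}^\ast$ is the minimum value of the program, this immediately yields $\bar{\alpha}^\ast\geq 1-1/\sqrt{\e}$. The natural case split is according to the sign of the coefficient $1-\frac{x_3}{1-x_3}=\frac{1-2x_3}{1-x_3}$ appearing in \eqref{cons:lower3}, which changes at $x_3=1/2$. In the easy regime $x_3\geq 1/2$ I would not invoke \eqref{cons:lower3} at all: constraints \eqref{cons:lower1} and \eqref{cons:fs} already give $\alpha\geq x_1\geq 1-\e^{-x_3}$, and since $\e^{-x_3}$ is decreasing this is at least $1-\e^{-1/2}=1-1/\sqrt{\e}$.

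The substantive regime is $x_3<1/2$. Writing $\beta:=\frac{1-2x_3}{1-x_3}>0$, constraint \eqref{cons:lower3} reads $\alpha\geq x_1+\beta x_2$. The idea is to eliminate $x_2$ between this and \eqref{cons:lower2}: multiplying \eqref{cons:lower2} (in the form $x_2\geq 1-\alpha-x_1$) by $\beta>0$ and combining with $\beta x_2\leq \alpha-x_1$ gives $\beta(1-\alpha-x_1)\leq \alpha-x_1$, that is,
\[
\alpha\geq \frac{\beta+(1-\beta)x_1}{1+\beta}.
\]
Because $1-\beta=\frac{x_3}{1-x_3}\geq 0$, the right-hand side is nondecreasing in $x_1$, so \eqref{cons:fs} permits the substitution $x_1\geq 1-\e^{-x_3}$. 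Using the identities $\beta+\frac{x_3}{1-x_3}=1$ and $1+\beta=2-\frac{x_3}{1-x_3}$, this reduces to $\alpha\geq h(x_3)$, where
\[
h(x_3):=\frac{1-x_3-x_3\,\e^{-x_3}}{2-3x_3}.
\]

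It then remains to establish the single-variable inequality $h(x_3)\geq 1-1/\sqrt{\e}$ for all $x_3\in[0,1/2]$. Since $2-3x_3>0$ on this range, I would clear the denominator and study $\phi(x_3):=\big(1-x_3-x_3\e^{-x_3}\big)-(1-1/\sqrt{\e})(2-3x_3)$, aiming to prove $\phi\geq 0$. A direct evaluation gives $\phi(1/2)=0$, so the bound is tight exactly at $x_3=1/2$, which is where the program attains its minimum. To conclude $\phi\geq 0$ on the whole interval I would show $\phi$ is decreasing there: differentiating yields $\phi'(x_3)=\big(3(1-1/\sqrt{\e})-1\big)-\e^{-x_3}(1-x_3)$, and since $\e^{-x_3}(1-x_3)$ is itself decreasing (its derivative equals $-\e^{-x_3}(2-x_3)<0$), $\phi'$ is increasing; verifying that $\phi'(1/2)<0$ then forces $\phi'<0$ throughout $[0,1/2]$, so $\phi$ is minimized at the right endpoint where it vanishes.

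I expect the transcendental one-dimensional inequality $h(x_3)\geq 1-1/\sqrt{\e}$ to be the main obstacle: the two case reductions are purely linear manipulations of the constraints, but this final step cannot be obtained from the constraints alone and must be handled by the explicit monotonicity (sign-of-derivative) analysis of $\phi$ sketched above. The equality $\phi(1/2)=0$ is also a useful consistency check, confirming both that the minimizer sits at $x_3=1/2$ and that the factor $1-1/\sqrt{\e}$ is the exact optimum of the program.
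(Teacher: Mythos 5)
Your proposal is correct and follows essentially the same route as the paper: the same case split at $x_3=1/2$, the same elimination of $x_2$ and $x_1$ via \eqref{cons:lower2}, \eqref{cons:lower3}, and \eqref{cons:fs} to reach the single-variable bound $\alpha\geq\frac{1-x_3-x_3\e^{-x_3}}{2-3x_3}$, and a sign-of-derivative argument for the final inequality (you clear the denominator and show $\phi$ is decreasing, while the paper differentiates the quotient directly, but both reduce to the same numerical check that $\sqrt{\e}<1.75$).
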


\begin{proof}[Proof of Theorem~\ref{thm:approx-ratio-mgreedy-append}]
	Combining Lemmas~\ref{lemma:ratio-alpha-append} and~\ref{lemma:alpha-bound-append} immediately gives Theorem~\ref{thm:approx-ratio-mgreedy-append}.
\end{proof}

\subsection{Data-Dependent Upper Bound for Cardinality Constraint}\label{subsec:ub-cardinality}
In this section, we show that when the knapsack constraint degenerates to a cardinality constraint, \ie~\[\max_{S\subseteq V} f(S) \text{ s.t.\@ }\lvert S\rvert\leq k,\] our upper bound $\Lambda$ is guaranteed to be smaller than $\frac{f(\Sg)}{1-1/\e}$, which matches the tight approximation factor of $(1-1/\e)$ \cite{Nemhauser_tight_1978}. Our analysis extends the result of $(1-1/\e)$-approximation using an alternative proof in a concise way.
\begin{theorem}\label{theorem:approximability-append}
	For monotone submodular maximization with a cardinality constraint, we have 
	\begin{equation}
		f(\OPT)\leq \Lambda\leq \frac{f(\Sg)}{1-1/\e}\leq \frac{f(\OPT)}{1-1/\e}.
	\end{equation}
\end{theorem}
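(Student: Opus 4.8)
The plan is to adapt the argument of Theorem~\ref{theorem:approximability_budgeted} to the unit-cost setting, where it collapses into the classical greedy estimate but with $\Lambda$ playing the role of $f(\OPT)$. First I observe that a cardinality constraint $\lvert S\rvert\leq k$ is exactly the knapsack instance with $c(v)=1$ for all $v$ and $b=k$. In this case the linear program defining $\Delta(b\mid S)$ in \eqref{eq:max_marginal_budgeted} has an integral optimal basis: the threshold index satisfies $c^*=k=b$, so the fractional term vanishes and $\Delta(k\mid S)$ reduces to the sum of the $k$ largest marginal gains $f(v\mid S)$ over $v\in V\setminus S$ (or all of them, if fewer than $k$ remain). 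The outer inequalities are then immediate: $f(\OPT)\leq\Lambda$ follows from \eqref{eq:upperbounds_budgeted} applied to every $S_i$ together with the definition \eqref{eqn:Lambda} of $\Lambda$, and $f(\Sg)\leq f(\OPT)$ holds because $\Sg$ is feasible ($\lvert\Sg\rvert=k$) while $\OPT$ is optimal. Hence the crux is the middle inequality, which I rewrite as $f(\Sg)\geq(1-1/\e)\Lambda$ with $\Sg=S_k$.

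Next I would establish a one-step progress bound. By the greedy rule the element $u_{i+1}$ added at step $i+1$ maximizes $f(v\mid S_i)$ over $v\in V\setminus S_i$, so each of the (at most $k$) marginal gains summed in $\Delta(k\mid S_i)$ is at most $f(u_{i+1}\mid S_i)$; therefore $\Delta(k\mid S_i)\leq k\cdot f(u_{i+1}\mid S_i)$. Combining this with $f(S_i)+\Delta(k\mid S_i)\geq\Lambda$, which is just the statement that $\Lambda$ is the minimum over $i$, yields
\[
f(u_{i+1}\mid S_i)\geq\frac{\Lambda-f(S_i)}{k},
\]
and since $f(S_{i+1})=f(S_i)+f(u_{i+1}\mid S_i)$ this rearranges to $\Lambda-f(S_{i+1})\leq(1-1/k)\bigl(\Lambda-f(S_i)\bigr)$.

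Finally I would telescope this recursion from $i=0$ to $i=k$. Using $f(S_0)=f(\emptyset)=0$ gives $\Lambda-f(S_k)\leq(1-1/k)^{k}\Lambda\leq\Lambda/\e$, so $f(\Sg)=f(S_k)\geq(1-1/\e)\Lambda$, which is precisely the desired middle inequality. Assembling the three parts completes the chain $f(\OPT)\leq\Lambda\leq f(\Sg)/(1-1/\e)\leq f(\OPT)/(1-1/\e)$.

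I do not expect a serious obstacle here, since this is essentially the Nemhauser--Wolsey--Fisher greedy estimate carried out against $\Lambda$ rather than $f(\OPT)$, and the use of $\Lambda$ makes the per-step bound $f(S_i)+\Delta(k\mid S_i)\geq\Lambda$ an equality-type input rather than something to be derived. The only points requiring care are the simplification of $\Delta(k\mid S)$ to a sum of $k$ marginals and the inequality $\Delta(k\mid S_i)\leq k\cdot f(u_{i+1}\mid S_i)$, which must remain valid when fewer than $k$ elements of $V\setminus S_i$ remain---there the sum has fewer than $k$ terms, each still at most $f(u_{i+1}\mid S_i)$, so the bound only improves.
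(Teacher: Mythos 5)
Your proposal is correct and follows essentially the same route as the paper: the key chain $f(S_i)+k\cdot f(u_{i+1}\mid S_i)\geq f(S_i)+\Delta(k\mid S_i)\geq\Lambda$, the per-step recursion $\Lambda-f(S_{i+1})\leq(1-1/k)(\Lambda-f(S_i))$, and the telescoping to $(1-1/k)^k\leq 1/\e$ are exactly the paper's argument. The only difference is that you spell out why $\Delta(k\mid S)$ degenerates to a sum of $k$ marginals and handle the case of fewer than $k$ remaining elements, details the paper leaves implicit.
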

\begin{proof}
	By monotonicity, submodularity and the greedy rule, we have
	\begin{equation*}
	f(S_{i})+k\cdot f(u_{i+1}\mid S_{i})\geq f(S_i)+\Delta(k\mid S_i)\geq \Lambda,
	\end{equation*}
	where $u_i$ is the $i$-the element selected by the greedy heuristic and $S_i:=\{u_1,u_2,\dotsc,u_i\}$, \eg~$S_k=\Sg$. 
	Rearranging it yields 
	\begin{equation*}
	\Lambda-f(S_{i+1})\leq \left(1-1/k\right)\cdot\big(\Lambda-f(S_{i})\big).
	\end{equation*}
	Recursively, we have
	\begin{equation*}
	\Lambda-f(S_{k})\leq \left(1-1/k\right)^k\cdot\big(\Lambda-f(S_{0})\big)=\left(1-1/k\right)^k\cdot\Lambda\leq 1/\e\cdot \Lambda.
	\end{equation*}
	Rearranging it completes the proof.
\end{proof}

In each iteration of the greedy heuristic, it takes $O(n)$ time to find $u$ and the largest $k$ marginal gains \cite{Blum_selection_1973}, where $n=|V|$ is the size of ground set. There are $k$ iterations in the greedy algorithm. Thus, the total complexity of deriving $\Lambda$ is $O(kn)$, which remains the same as that of greedy algorithm.

	\section{Related Work}
\citet{Nemhauser_submodular_1978} studied monotone submodular maximization with a cardinality constraint, and proposed a greedy heuristic that achieves an approximation factor of $(1-1/\e)$. For this problem, \citet{Nemhauser_tight_1978} showed that no polynomial algorithm can achieve an approximation factor exceeding $(1-1/\e)$. \citet{Feige_setcover_1998} further showed that even maximum coverage (which is a special submodular function) cannot be approximated in polynomial time within a ratio of $(1-1/\e+\varepsilon)$ for any given $\varepsilon>0$, unless $\mathrm{P}\!=\!\mathrm{NP}$. Leveraging the notion of {curvature}, \citet{Conforti_curvature_1984} obtained an improved upper bound $(1-\e^{-\kappa_f})/\kappa_f$, where $\kappa_f:=1-\min_{v\in V}\frac{f(v\mid V\setminus\{v\})}{f(\{v\})}\in [0,1]$ measures how much $f$ deviates from modularity. By utilizing multilinear extension \cite{Calinescu_multilinear_2011}, \citet{Sviridenko_curvature_2015} proposed a continuous greedy algorithm that can further improve the approximation ratio to $(1-\kappa_f/\e-\varepsilon)$ at the cost of increasing time complexity from $O(kn)$ to $\tilde{O}(n^4)$, where $k$ is the maximum cardinality of elements in the optimization domain. Different from these studies, we focus on the more general problem of monotone submodular maximization with a knapsack constraint, for which the greedy heuristic does not have any bounded approximation guarantee.

\citet{Wolsey_MGreedy_1982} proposed a modified greedy algorithm of $O(n^2)$, referred to as \MG, that gives a constant approximation factor of $0.357$ for the  problem of monotone submodular maximization with a knapsack constraint. \citet{Khuller_BMCP_1999} showed that \MG can achieve an approximation guarantee of $(1-1/\sqrt{\e})$ for the budgeted maximum coverage problem. This factor, after being extensively mentioned in the literature, was recently pointed out by \citet{Zhang_billboard_2018} to be problematic due to the flawed proof. Such a long-term misunderstanding on the factor of $(1-1/\sqrt{\e})$ becomes a critical issue needed to be solved urgently. In this paper, we show that the \MG algorithm can achieve an improved constant approximation ratio of $0.405$ through a careful analysis, which answers the open question of whether the worst-case approximation guarantee of \MG is better than $(1-1/\sqrt{\e})$. In addition, we also enhance the \MG algorithm to derive a data-dependent upper bound on the optimum, which slightly increases the time complexity of \MG by a multiplicative factor of $\log n$. We theoretically show that the ratio of the solution value obtained by \MG to our upper bound is always larger than $0.357$, which is again tighter than the approximation factor given by \citet{Khuller_BMCP_1999} and matches that given by \citet{Wolsey_MGreedy_1982}. We note that \citet{Leskovec_CELF_2007} developed an upper bound of $f(\Sg)+\Delta(b\mid \Sg)$ in our notations, which is always looser than ours. Unlike our upper bound with worst-case guarantees, the relationship between $f(\Sg)+\Delta(b\mid \Sg)$ and $f(\Sm)$ is unclear. As has been demonstrated in the experiments, our upper bound is significantly tighter than that developed by \citet{Leskovec_CELF_2007}. 

In addition to the modified greedy algorithm, \citet{Khuller_BMCP_1999} also gave a partial enumeration greedy heuristic that can achieve $(1-1/\e)$-approximation, which was later shown to be also applicable to the general submodular functions by \citet{Sviridenko_maxSub_2004}. Recently, \citet{Yoshida_curvature_2016} proposed a continuous greedy algorithm achieving a curvature-based approximation guarantee of $(1-\kappa_f/\e-\varepsilon)$. However, the time complexities of the partial enumeration greedy algorithm and the continuous greedy algorithm are as high as $O(n^5)$ and $\tilde{O}(n^5)$, respectively. These algorithms \cite{Sviridenko_maxSub_2004,Yoshida_curvature_2016} are hard to apply in practice. Some recent work~\cite{Badanidiyuru_MSMK_2014,Ene_MSMK_2019} proposed algorithms with $(1-1/\e-\varepsilon)$-approximation. These algorithms are again impractical due to the high dependency on $\varepsilon$, \ie~$(\nicefrac{\log n}{\varepsilon})^{O(\nicefrac{1}{\varepsilon^8})}n^2$ \cite{Badanidiyuru_MSMK_2014} and $(\nicefrac{1}{\varepsilon})^{O(\nicefrac{1}{\varepsilon^4})}n\log^2n$ \cite{Ene_MSMK_2019}, which are of theoretical interests only.

Table~\ref{table:results_MS} summarizes the results for monotone submodular maximization with a knapsack constraint. Through comparison, we find that there is still a gap between our newly derived approximation factor of $0.405$ and the best known factor of $(1-1/\e)$. However, there does not exist practical algorithms that can achieve the optimal approximation ratio of $(1-1/\e)$. In fact, devising efficient algorithms with approximation better than \MG, \eg~approaching $(1-1/\e)$ or beating $1/2$, is a challenging problem. 

\begin{table}[!t]
	\centering
	\caption{Comparison of approximations for monotone submodular maximization with a knapsack constraint.}
	\vspace{-0.03in}
	\label{table:results_MS}
	\setlength{\tabcolsep}{0.6em} 
	\begin{tabular}{cllcll}
		\toprule
		\multicolumn{3}{c}{\textbf{\MG (Low Complexity)}} & \multicolumn{3}{c}{\textbf{Other Algorithms (High Complexity)}} \\ \cmidrule(lr){1-3} \cmidrule(lr){4-6}
		{Method} & {Approximation} & {Time} & {Method} & {Approximation} & {Time} \\ 
		\midrule
		\cite{Wolsey_MGreedy_1982} & $0.357$ & $O(n^2)$ & \cite{Sviridenko_maxSub_2004} & $1-1/\e$ & $O(n^5)$ \\ 
		\cite{Khuller_BMCP_1999} & $\frac{(1-1/\e)}{2}\approx 0.316$ & $O(n^2)$ & \cite{Yoshida_curvature_2016} & $1-\kappa_f/\e-\varepsilon$ & $\tilde{O}(n^5)$ \\ 
		\textbf{Our} & $0.405$ & $O(n^2)$ & \cite{Badanidiyuru_MSMK_2014} & $1-1/\e-\varepsilon$ & $(\nicefrac{\log n}{\varepsilon})^{O(\nicefrac{1}{\varepsilon^8})}n^2$ \\ 
		\textbf{Our} & $\frac{f(\Sm)}{\Lambda}> 0.357$ & $O(n^2\log n)$ & \cite{Ene_MSMK_2019} & $1-1/\e-\varepsilon$ & $(\nicefrac{1}{\varepsilon})^{O(\nicefrac{1}{\varepsilon^4})}n\log^2n$ \\ 
		\bottomrule
	\end{tabular}
\vspace{-0.02in}
\end{table}

	\section{Conclusion}\label{sec:conclusion}
In this paper, we show that \MG can achieve an approximation factor of $0.405$ for monotone submodular maximization with a knapsack constraint. This factor not only significantly improves the known factors of $0.357$ and $(1-1/\e)/2\approx 0.316$ but also closes a critical gap on the misunderstood factor of $(1-1/\sqrt{\e})\approx 0.393$ in the literature. We also derive a data-dependent upper bound on the optimum that is guaranteed to be smaller than a multiplicative factor of $\frac{1}{0.357}$ to the solution value obtained by \MG. Empirical evaluations for the application of viral marketing in social networks show that our  bound is quite close to the optimum. It remains an open question to study whether the approximation factor of $0.405$ for \MG is completely tight.



	\begin{acks}
	This research is supported by \grantsponsor{NRF-RSS2016-004}{Singapore National Research Foundation}{} under grant~\grantnum{NRF-RSS2016-004}{NRF-RSS2016-004}, by \grantsponsor{MOE2019-T1-002-042}{Singapore Ministry of Education Academic Research Fund Tier 1}{} under grant~\grantnum{MOE2019-T1-002-042}{MOE2019-T1-002-042}, by the \grantsponsor{2018AAA0101204}{National Key R\&D Program of China}{} under Grant No.\@~\grantnum{2018AAA0101204}{2018AAA0101204}, by the \grantsponsor{61772491}{National Natural Science Foundation of China (NSFC)}{} under Grant No.\@~\grantnum{61772491}{61772491} and Grant No.\@~\grantnum{U1709217}{U1709217}, by \grantsponsor{AHY150300}{Anhui Initiative in Quantum Information Technologies}{} under Grant No.\@~\grantnum{AHY150300}{AHY150300}, and by \grantsponsor{CNS-1951952}{National Science Foundation Grant}{} \grantnum{CNS-1951952}{CNS-1951952}.
\end{acks}
	\appendix
\section{Missing Proofs}\label{sec:appendix}
\begin{proof}[Proof of Lemma~\ref{lemma:ratio-alpha}]
	To simplify the notations, define 
	\begin{align*}
	&c_Q:=\frac{c(Q)}{b},\ c_o:=\frac{c(o)}{b},\ c_{o^\prime}:=\frac{c(o^\prime)}{b},\ c^\prime:=\frac{c(\OPT^\prime)}{b},\\
	&f_Q:=\frac{f(Q)}{f(\OPT)},\ f_o:=\frac{f(o\mid Q)}{f(\OPT)},\ \hat{f}_o:=\frac{f(o\mid \emptyset)}{f(\OPT)},\ \hat{f}_{o^\prime}:=\frac{f(o^\prime\mid \emptyset)}{f(\OPT)},\ f^\prime :=\frac{ f(\OPT^\prime\mid Q)}{f(\OPT)},\\
	\text{and}\quad&\alpha_{\mathrm{m}}:=\frac{f(\Sm)}{f(\OPT)}.
	\end{align*}
	We show that $\alpha=\alpha_{\mathrm{m}}$, $x_1=f_Q$, $x_2=f_o$, $x_3=f^\prime$, $x_4=c_Q$, $x_5=c_o$, and $x_6=c_{o^\prime}$ are always feasible to the optimization problem defined in the lemma, which indicates that $f(\Sm)\geq \alpha^\ast \cdot f(\OPT)$.
	
	By the algorithm definition, 
	\begin{align}
	&\alpha_{\mathrm{m}}\geq \frac{f(\Sg)}{f(\OPT)}\geq f_Q, \tag{Constraint~\eqref{cons:lower-fs}}\\
	\text{and}\quad&\alpha_{\mathrm{m}}\geq \frac{f(v^\ast)}{f(\OPT)}\geq f_o. \tag{Constraint~\eqref{cons:lower-fu}}
	\end{align}
	By Corollary~\ref{corollary:Sg-OPT-2}, we have
	\begin{equation}
	\alpha_{\mathrm{m}}\geq f_Q+(1-\e^{(c_Q+c_{o^\prime}-1)/c_Q})f^\prime.\tag{Constraint~\eqref{cons:lower-fs-plus}}
	\end{equation}
	By Corollary~\ref{corollary:v-OPT}, we have 
	\begin{equation}
	f_Q\geq (1-1/\e)(1-2\alpha_{\mathrm{m}})+(c_Q+c_o+c_{o^\prime}-1)f_o/c_o.\tag{Constraint~\eqref{cons:fs-lower-1}}
	\end{equation}
	By Corollary~\ref{corollary:Sg-OPT-1}, we have 
	\begin{equation}
	f_Q\geq 1-\e^{-c_Q}. \tag{Constraint~\eqref{cons:fs-lower-2}}
	\end{equation}
	Due to monotonicity, submodularity and the greedy rule, it is easy to get that 
	\begin{align}
	&f_Q+f_o+f^\prime\geq 1,\tag{Constraint~\eqref{cons:sub1}}\\
	\text{and}\quad&f_Q+\frac{f_o}{c_o}\geq 1.\tag{Constraint~\eqref{cons:sub2}}
	\end{align}
	Meanwhile, due to budget violation 
	\begin{equation}
	c_S+c_u>1.\tag{Constraint~\eqref{cons:budget}}
	\end{equation}
	Finally, by definition, 
	\begin{equation}
	\alpha_{\mathrm{m}},f_Q,f_o,f^\prime,c_Q,c_o,c_{o^\prime}\in [0,1].\tag{Constraint~\eqref{cons:domain-all}}
	\end{equation}
	As can be seen, all constraints are satisfied, and hence the proof is done.
\end{proof}

\begin{proof}[Proof of Lemma~\ref{lemma:alpha-bound}]
	We first consider the case $x_4+x_6-1\geq 0$. We obtain from \eqref{cons:fs-lower-1} and \eqref{cons:fs-lower-2} that
	\begin{align}
	&x_1 \geq (1-1/\e)(1-2\alpha)+x_2,\label{cons:fs-lower-1-simple}\\
	\text{and}\quad&-\ln(1-x_1)\geq x_4.\label{cons:fs-lower-2-simple}
	\end{align}
	Then, $x_5\times \eqref{cons:sub2}+\eqref{cons:fs-lower-1-simple}+(1-x_1)\times(\eqref{cons:budget}+\eqref{cons:fs-lower-2-simple})$ gives
	\begin{equation*}
	x_1-(1-x_1)\ln(1-x_1)\geq 1-x_1+(1-1/\e)(1-2\alpha).
	\end{equation*}
	Rearranging yields
	\begin{equation}
	(1-x_1)(\ln(1-x_1)+2)-1/\e-2(1-1/\e)\alpha\leq 0.
	\end{equation}
	When $\alpha\geq 1-\e^{-3}$, we directly have $\alpha\geq \alpha^\bot$. When $\alpha\leq 1-\e^{-3}$, we know that $x_1\leq 1-\e^{-3}$ by \eqref{cons:lower-fs}. Then, $(1-x_1)(\ln(1-x_1)+2)$ decreases along with $x_1$, which indicates that \[(1-\alpha)(\ln(1-\alpha)+2)-1/\e-2(1-1/\e)\alpha\leq 0.\] Note that the left hand side is equal to $(1-\alpha)\ln(1-\alpha)+(2-1/\e)(1-2\alpha)$, which strictly decreases along with $\alpha$ when $\alpha\leq 1-\e^{-3}$. This implies that $\alpha\geq \alpha^\bot$.
	
	Next, we consider the case $x_4+x_6-1\leq 0$. We prove $\alpha\geq \alpha^\bot$ by contradiction. Assume on the contrary that $\alpha<\alpha^\bot$. Then, $x_1<\alpha^\bot$ and $x_2<\alpha^\bot$. We can get from \eqref{cons:lower-fs-plus} and \eqref{cons:sub1} that
	\begin{equation*}
	x_4+x_6-1\geq \ln\Big(1-\frac{\alpha-x_1}{x_3}\Big)\cdot x_4\geq \ln\Big(\frac{1-\alpha^\bot-x_2}{1-x_1-x_2}\Big)\cdot x_4.
	\end{equation*}
	Combining it with \eqref{cons:fs-lower-1} and \eqref{cons:budget} gives
	\begin{align}
	x_1
	&\geq (1-{1}/{\e})(1-2\alpha)+x_2\Big(1+\frac{x_4+x_6-1}{1-x_4}\Big)\nonumber\\
	&> (1-{1}/{\e})(1-2\alpha^\bot)+x_2\Big(1+\ln\Big(\frac{1-\alpha^\bot-x_2}{1-x_1-x_2}\Big)\cdot \frac{x_4}{1-x_4}\Big).\label{eqn:fs-lower-1}
	\end{align}
	Note that as $\ln(\tfrac{1-\alpha^\bot-x_2}{1-x_1-x_2})\leq 0$, the left hand side of the above inequality decreases along with $x_4$ when $x_4\in [0,1)$. By \eqref{cons:lower-fs} and \eqref{cons:fs-lower-2-simple}, we have \[x_4\leq -\ln(1-\alpha^\bot).\] Combining with \eqref{eqn:fs-lower-1} gives
	\begin{equation}\label{eqn:fs-lower-2}
	x_1> (1-1/\e)(1-2\alpha^\bot)+x_2\Big(1+\ln\Big(\frac{1-\alpha^\bot-x_2}{1-x_1-x_2}\Big)\cdot \frac{-\ln(1-\alpha^\bot)}{1+\ln(1-\alpha^\bot)}\Big).
	\end{equation}
	Meanwhile, $x_5\times \eqref{cons:sub2}+(1-x_1)\times(\eqref{cons:budget}+\eqref{cons:fs-lower-2-simple})$ gives
	\begin{equation}\label{eqn:fs-lower-3}
	x_2\geq (1-x_1)(1+\ln(1-x_1)),
	\end{equation}
	where the right hand side decreases along with $x_1$ when $x_1\in[0,\alpha^\bot)$ as required by \eqref{cons:lower-fs}. Then, 
	\begin{equation*}
	x_2\geq (1-\alpha^\bot)(1+\ln(1-\alpha^\bot)).
	\end{equation*}
	Define \[x^\bot_2:=(1-\alpha^\bot)(1+\ln(1-\alpha^\bot)).\] Furthermore, for the right hand side of \eqref{eqn:fs-lower-2}, define \[g(x_2):=x_2\Big(1+\ln\Big(\frac{1-\alpha^\bot-x_2}{1-x_1-x_2}\Big)\cdot \frac{-\ln(1-\alpha^\bot)}{1+\ln(1-\alpha^\bot)}\Big)\] subject to $x_2\in[x^\bot_2,\alpha^\bot]$. Taking the derivative of $g(x_2)$ with respective to $x_2$ gives
	\begin{equation*}
	g^\prime(x_2)=1+\Big(\ln\Big(\frac{1-\alpha^\bot-x_2}{1-x_1-x_2}\Big)+\frac{x_2(x_1-\alpha^\bot)}{(1-x_1-x_2)(1-\alpha^\bot-x_2)}\Big)\cdot \frac{-\ln(1-\alpha^\bot)}{1+\ln(1-\alpha^\bot)}.
	\end{equation*}
	Observe that $g^\prime(x_2)$ decreases along with $x_2$. Thus, $g(x_2)\geq \min\{g(x^\bot_2),g(\alpha^\bot)\}$. Define
	\begin{equation*}
	\tilde{g}(x_1):=g(\alpha^\bot)-g(x^\bot_2).
	\end{equation*}
	Taking the derivative of $\tilde{g}(x_1)$ with respect to $x_1$ gives
	\begin{align*}
	\tilde{g}^\prime(x_1)
	&=\Big(\frac{\alpha^\bot}{1-x_1-\alpha^\bot}-\frac{x^\bot_2}{1-x_1-x^\bot_2}\Big)\cdot\frac{-\ln(1-\alpha^\bot)}{1+\ln(1-\alpha^\bot)}\\
	&=\frac{(\alpha^\bot-x^\bot_2)(1-x_1)}{(1-x_1-\alpha^\bot)(1-x_1-x^\bot_2)}\cdot\frac{-\ln(1-\alpha^\bot)}{1+\ln(1-\alpha^\bot)}\\
	&\geq 0.
	\end{align*}
	Meanwhile, by \eqref{eqn:fs-lower-3}, we can get that \[\alpha^\bot>(1-x_1)(1+\ln(1-x_1)),\] which indicates that $x_1>0.32$. Hence, $\tilde{g}(x_1)\geq\tilde{g}(0.32)$. One can verify that
	\begin{equation*}
	\tilde{g}(0.32)
	=\alpha^\bot\Big(1+\ln\Big(\frac{1-2\alpha^\bot}{0.68-\alpha^\bot}\Big)\cdot \frac{-\ln(1-\alpha^\bot)}{1+\ln(1-\alpha^\bot)}\Big)-x^\bot_2\Big(1+\ln\Big(\frac{1-\alpha^\bot-x^\bot_2}{0.68-x^\bot_2}\Big)\cdot \frac{-\ln(1-\alpha^\bot)}{1+\ln(1-\alpha^\bot)}\Big)>0.
	\end{equation*}
	This implies that $g(x_2)\geq g(x^\bot_2)$. Therefore, \eqref{eqn:fs-lower-2} can be further relaxed to 
	\begin{equation}\label{eqn:fs-lower-4}
	x_1> (1-1/\e)(1-2\alpha^\bot)+x^\bot_2\Big(1+\ln\Big(\frac{1-\alpha^\bot-x^\bot_2}{1-x_1-x^\bot_2}\Big)\cdot \frac{-\ln(1-\alpha^\bot)}{1+\ln(1-\alpha^\bot)}\Big).
	\end{equation}
	Furthermore, define \[\hat{g}(x_1):=x^\bot_2\Big(1+\ln\Big(\frac{1-\alpha^\bot-x^\bot_2}{1-x_1-x^\bot_2}\Big)\cdot \frac{-\ln(1-\alpha^\bot)}{1+\ln(1-\alpha^\bot)}\Big)-x_1\] subject to $x_1\in[0,\alpha^\bot)$. Taking the derivative of $\hat{g}(x_1)$ with respect to $x_1$ gives
	\begin{equation*}
	\hat{g}^\prime(x_1)=\frac{x^\bot_2}{1-x_1-x^\bot_2}\cdot \frac{-\ln(1-\alpha^\bot)}{1+\ln(1-\alpha^\bot)}-1\leq \frac{x^\bot_2}{1-\alpha^\bot-x^\bot_2}\cdot \frac{-\ln(1-\alpha^\bot)}{1+\ln(1-\alpha^\bot)}-1=0.
	\end{equation*}
	This indicates that \[\hat{g}(x_1)\geq \hat{g}(\alpha^\bot)=x^\bot_2-\alpha^\bot.\] Putting it together with \eqref{eqn:fs-lower-4} gives
	\begin{equation*}
	0>(1-1/\e)(1-2\alpha^\bot)+x^\bot_2-\alpha^\bot=(1-\alpha^\bot)\ln(1-\alpha^\bot)+(2-1/\e)(1-2\alpha^\bot)=0.
	\end{equation*}
	This shows a contradiction and completes the proof.
\end{proof}

\begin{proof}[Proof of Lemma~\ref{lemma:ratio-alpha-append}]
	Again, to simplify the notations, define 
	\begin{align*}
	&c_Q:=\frac{c(Q)}{b},\ c_o:=\frac{c(o)}{b},\ c^\prime:=\frac{c(\OPT^\prime)}{b},\\
	&f_Q:=\frac{f(Q)}{f(\OPT)},\ f_o:=\frac{f(o\mid Q)}{f(\OPT)},\ f^\prime :=\frac{ f(\OPT^\prime\mid Q)}{f(\OPT)},\ \text{and}\ \alpha_{\mathrm{m}}:=\frac{f(\Sm)}{f(\OPT)}.
	\end{align*}
	In what follows, we show that $\alpha=\alpha_{\mathrm{m}}$, $x_1=f_Q$, $x_2=f^\prime$, and $x_3=c_Q$ are always feasible to the optimization problem defined in the lemma, which indicates that $f(\Sm)\geq \bar{\alpha}^\ast \cdot f(\OPT)$.
	
	By the algorithm definition, 
	\begin{equation}
	\alpha_{\mathrm{m}}\geq f_Q.\tag{Constraint~\eqref{cons:lower1}}
	\end{equation}
	Due to monotonicity, submodularity and the greedy rule, it is easy to get that 
	\begin{equation*}
	f_Q+f_o+f^\prime\geq 1.
	\end{equation*}
	Together with $\alpha_{\mathrm{m}}\geq f_o$, we have
	\begin{equation}
	\alpha_{\mathrm{m}}\geq 1-f_Q-f^\prime.\tag{Constraint~\eqref{cons:lower2}}
	\end{equation}
	By Corollary~\ref{corollary:Sg-OPT-3}, we have
	\begin{equation}
	\alpha_{\mathrm{m}}\geq f_Q+\Big(1-\frac{c_Q}{1-c_Q}\Big)\cdot f^\prime.\tag{Constraint~\eqref{cons:lower3}}
	\end{equation}
	By Corollary~\ref{corollary:Sg-OPT-1}, we have 
	\begin{equation}
	f_Q\geq 1-\e^{-c_Q}.\tag{Constraint~\eqref{cons:fs}}
	\end{equation}
	Finally, by definition,
	\begin{equation}
	\alpha_{\mathrm{m}}, f_Q, f^\prime, c_Q\in [0,1].\tag{Constraint~\eqref{cons:domain}}
	\end{equation}
	As can be seen, all constraints are satisfied, and hence the lemma is proven.
\end{proof}

\begin{proof}[Proof of Lemma~\ref{lemma:alpha-bound-append}]
	If $x_3>0.5$, constraints \eqref{cons:lower1} and \eqref{cons:fs} directly show that $\alpha \geq 1-1/\sqrt{\e}$. Next, we consider the case $0\leq x_3\leq 0.5$. Then, $(1-2x_3)\times \eqref{cons:lower2}+(1-x_3)\times \eqref{cons:lower3}+x_3\times \eqref{cons:fs}$ gives 
	\begin{equation*}
	(2-3x_3)\alpha \geq 1-x_3-x_3\e^{-x_3}.
	\end{equation*}
	Define \[g(x):=\frac{1-x-xe^{-x}}{2-3x}\] subject to $0\leq x\leq 0.5$. Taking the derivative of $g(x)$ with respect to $x$ gives \[g^\prime(x)=\frac{1-(3x^2-2x+2)\e^{-x}}{(2-3x)^2}.\] Furthermore, the derivative of $(2-3x)^2g^\prime(x)$ with respect to $x$ is $(3x^2-8x+4)\e^{-x}$, which is non-negative when $0\leq x\leq 0.5$. Thus, $(2-3x)^2g^\prime(x)$ achieves its maximum at $x=0.5$, i.e.,~the maximum is $1-1.75\e^{-0.5}<0$. This implies that $g(x)$ achieves its minimum at $x=0.5$, i.e.,~the minimum is $1-1/\sqrt{\e}$. Therefore, when $0\leq x_3\leq 0.5$, it also holds that \[\alpha\geq 1-1/\sqrt{\e},\] which concludes the lemma.
\end{proof}
	
\end{sloppy}

\bibliographystyle{ACM-Reference-Format}
\bibliography{reference}
	
\end{document}